\def\nodraft {}
\newtheorem{thm}{Theorem}
\newtheorem{crll}{Corollary}
\theoremstyle{definition}
\newtheorem{defn}{Definition}
\newtheorem{prop}{Proposition}
\newtheorem{expl}{Example}
\newtheorem{rem}{Remark}
\DeclareMathOperator{\res}{res}
\DeclareMathOperator{\id}{id}
\DeclareMathOperator{\im}{im}
\newcommand{\meet}{\wedge}
\newcommand{\join}{\vee}
\newcommand{\hopfrt}{\mathcal{H}^\text{\normalfont{rt}}}
\newcommand{\hopffg}{\mathcal{H}^\text{\normalfont{fg}}}
\newcommand{\hopffgs}{\widetilde{\mathcal{H}}^\text{\normalfont{fg}}}
\newcommand{\hopfpos}{\mathcal{H}^\text{\normalfont{P}}}
\newcommand{\hopflat}{\mathcal{H}^\text{\normalfont{L}}}
\newcommand{\unit}{\text{\normalfont{u}}}
\newcommand{\counit}{\epsilon}
\newcommand{\be}{\begin{equation}}
\newcommand{\ee}{\end{equation}}
\newcommand{\bea}{\begin{eqnarray}}
\newcommand{\eea}{\end{eqnarray}}
\newcommand{\beas}{\begin{eqnarray*}}
\newcommand{\eeas}{\end{eqnarray*}}
\newcommand{\subdiags}{\mathcal{P}}
\newcommand{\sdsubdiags}{\mathcal{P}^\text{s.d.}}
\newcommand{\sdsubdiagsn}{\widetilde{\mathcal{P}}^\text{s.d.}}
\def\One{\mathbb{I}}
\def\id{\mathrm{id}}
\def\res{\mathrm{res}}
\tikzset{
    photon/.style={
        decoration={complete sines, amplitude=0.15cm, segment length=0.2cm},
        decorate    
    },
    fermion/.style={
        decoration={
            markings,
            mark=at position 0.5 with {\node[transform shape, xshift=-0.5mm, fill=black, inner sep=1pt, draw, isosceles triangle]{};}
        },
        postaction=decorate
    },
    gluon/.style={
        decoration={coil, aspect=0.75, mirror, amplitude=.4mm, segment length=.8mm},
        decorate
    }, 
    meson/.style={
        dashed
    }, 
    left/.style={
        bend left=90,
        looseness=1.75
    }, 
    leftsoft/.style={
        bend left=45,
        looseness=1.25
    }
}
\tikzset{every loop/.style={looseness=12,min distance=1cm}}
\title{Feynman diagrams and their algebraic lattices}
\author{Michael Borinsky\\Math.\ \& Physics Dept.\\ Humboldt U.\\10099 Berlin\\Germany\and Dirk Kreimer\thanks{DK thanks the Alexander von Humboldt Foundation and the BMBF for support by an Alexander von Humboldt Professorship. It is a pleasure for both authors to thank David Broadhurst, Spencer Bloch, Dominique Manchon and Karen Yeats for helpful discussions.  We also thank Fr\'ederic Fauvet for organizing the workshop {\em Resurgence, Physics and Numbers}, Centro de Giorgi, Pisa, May 2015, and hospitality there.  }\\Math.\ \&  Physics Dept.\\ Humboldt U.\\10099 Berlin\\Germany}
\date{}
\begin{document}
\maketitle
\begin{abstract}
We present the lattice structure of Feynman diagram renormalization in physical QFTs from the viewpoint of Dyson--Schwinger--Equations and the core Hopf algebra of Feynman diagrams. The lattice structure encapsules the nestedness of diagrams. This structure can be used to give explicit expressions for the counterterms in zero-dimensional QFTs using the lattice-Moebius function. Different applications for the tadpole-free quotient, in which all appearing elements correspond to semimodular lattices, are discussed. 
\end{abstract}

\bibliographystyle{plain}
\section{The Hopf algebra of Feynman diagrams}
Following  \cite{ConnesKreimer2000} the BPHZ renormalization algorithm to obtain finite amplitudes in quantum field theory (QFT) shows that Feynman diagrams act as generators of a Hopf algebra $\hopffg$. Elaborate expositions of this Hopf algebra exist \cite{manchon2004}.

The coproduct of the Hopf algebra of Feynman diagrams on a renormalizable QFT takes the form
\begin{align}
    \label{eqn:def_cop}
        &\Delta: \Gamma \mapsto
        \sum \limits_{ \gamma \in \sdsubdiags_D(\Gamma) }
            \gamma \otimes \Gamma/\gamma& &:& &{\hopffg_D} \rightarrow {\hopffg_D} \otimes {\hopffg_D}, 
\end{align}
($\gamma=\emptyset$, $\gamma=\Gamma$ allowed) where $\Gamma/\gamma$ is the contracted diagram which is obtained by shrinking all edges of $\gamma$ in $\Gamma$ to a point and 
\begin{align}
\label{eqn:sdsubdiags}
  \sdsubdiags_D(\Gamma):= \left\{\gamma \subset \Gamma \text{ such that } 
      \gamma = \prod \limits_i \gamma_i, \gamma_i \in \subdiags_{\text{1PI}}(\Gamma) \text{ and } \omega_D( \gamma_i ) \leq 0 \right\},
\end{align}
is the set of \textit{superficially divergent subdiagrams} or s.d.\ subdiagrams. $\omega_D(\Gamma)$ denotes the power counting superficial degree of divergence of the diagram $\Gamma$ in $D$ dimensional spacetime in the sense of Weinberg's Theorem \cite{Weinberg1960}. These are subdiagrams of $\Gamma$ whose connected components are superficially divergent 1PI diagrams.

Applying an evaluation of graphs by renormalized Feynman rules $\Phi_R:\hopffg_D\to\mathbb{C}$, a specific Feynman diagram will always map to a unique power in $\hbar$, $\hbar^{h_1(\Gamma)}$. 

For renormalized Feynman rules the task is to produce for each graph providing  an unrenormalized integrand (a form on the de Rham side) and a domain of integration (the Betti side) a well-defined period by pairing those two sides.

There are two avenues to proceed to obtain renormalized Feynman rules: one can either introduce a regulator $\epsilon$ say (dimensional regularization being a prominent choice with spacetime dimension $D=4-\epsilon$) and work with unrenormalized Feynman rules $\Phi(\epsilon)$ depending on the regulator $\epsilon$, or one renormalizes the integrand first avoiding a regulator altogether.

In the former case, the pairing gives a Laurent series with poles of finite order in $\epsilon$. The degree of the pole is bounded by the coradical degree of the Feynman graph under consideration. Adding correction terms as dictated by the Hopf algebra provides an expression for which the regulator can be removed, $\epsilon\to 0$.

In the latter case, the integrand is relegated to correction terms -again dictated by the Hopf algebra- which amount to sequences of blow-ups 
with the length of the sequence bounded by the coradical degree \cite{BrKr,KrMadrid,BingenLect}. 

The nature of the coradical degree and its systematic study using the lattice structure of Feynman diagrams will be described in what follows. 

Using the reduced coproduct, $\widetilde{\Delta} = \Delta - \mathbb{I} \otimes \id - \id \otimes \mathbb{I}$, the coradical degree of an element $h \in \hopffg_D$ is the minimal number $d=: \mathrm{cor}(h)$ such that
\begin{align}
\underbrace{ (\id^{\otimes (d-1)} \otimes \widetilde{\Delta} ) \circ \cdots \circ (\id \otimes \widetilde{\Delta}) \circ \widetilde{\Delta}}_{d-\text{times}} h = 0.
\end{align}
The coradical degree of a Feynman diagram is a measure for the `nestedness' of Feynman diagrams. For instance, a Feynman diagram of coradical degree $1$ has no subdivergences. Such a diagram is a primitive element of the Hopf algebra $\hopffg_D$. A diagram with a single subdivergence has coradical degree $2$ and a diagram with a subdivergence, which has itself a subdivergence, coradical degree $3$ and so on. 

But what is the coradical degree if we have to deal with overlapping divergences? Of course, every diagram will have a well defined $\epsilon$ expansion even if it is not accessible by explicit calculation, but is there a combinatorial description that enables us to analyze the coradical degree directly? The answer can be found in the lattice structure of Feynman diagrams. 

\section{The lattice of subdiagrams}
It is obvious that $\sdsubdiags_D(\Gamma)$ is a poset ordered by inclusion. The statement that a subdiagram $\gamma_1$ covers $\gamma_2$ in $\sdsubdiags_D(\Gamma)$ is equivalent to the statement that $\gamma_1 / \gamma_2$ is primitive.

The Hasse diagram of a s.d.\ diagram $\Gamma$ can be constructed by the following procedure: 
\begin{enumerate}
\item
Draw the diagram and find all the maximal forests $\gamma_i \in  \sdsubdiags_D(\Gamma)$ such that $\Gamma/\gamma_i$ is primitive. 
\item Draw the diagrams $\gamma_i$ under $\Gamma$ and draw lines from $\Gamma$ to the $\gamma_i$. 
\item 
Subsequently, determine all the maximal forests $\mu_i$ of the $\gamma_i$ and draw them under the $\gamma_i$. 
\item 
Draw a line from $\gamma_i$ to $\mu_i$ if $\mu_i \subset \gamma_i$. 
\item Repeat this until only primitive diagrams are left. 
\item Then draw lines from the primitive subdiagrams to an additional $\emptyset$-diagram underneath them. 
\item In the end, replace diagrams by vertices.
\end{enumerate}

\begin{expl}
For instance, the set of superficially divergent subdiagrams for $D=4$ of the diagram,
\ifdefined\nodraft
$
{
\def\scale{2ex}
\begin{tikzpicture}[x=\scale,y=\scale,baseline={([yshift=-.5ex]current bounding box.center)}]
\begin{scope}[node distance=1]
\coordinate (v0);
\coordinate[right=.5 of v0] (v4);
\coordinate[above right= of v4] (v2);
\coordinate[below right= of v4] (v3);
\coordinate[below right= of v2] (v5);
\coordinate[right=.5 of v5] (v1);
\coordinate[above right= of v2] (o1);
\coordinate[below right= of v2] (o2);
\coordinate[below left=.5 of v0] (i1);
\coordinate[above left=.5 of v0] (i2);
\coordinate[below right=.5 of v1] (o1);
\coordinate[above right=.5 of v1] (o2);
\draw (v0) -- (i1);
\draw (v0) -- (i2);
\draw (v1) -- (o1);
\draw (v1) -- (o2);
\draw (v0) to[bend left=20] (v2);
\draw (v0) to[bend right=20] (v3);
\draw (v1) to[bend left=20] (v3);
\draw (v1) to[bend right=20] (v2);
\draw (v2) to[bend right=60] (v3);
\draw (v2) to[bend left=60] (v3);
\filldraw (v0) circle(1pt);
\filldraw (v1) circle(1pt);
\filldraw (v2) circle(1pt);
\filldraw (v3) circle(1pt);
\ifdefined\cvl
\draw[line width=1.5pt] (v0) to[bend left=20] (v2);
\draw[line width=1.5pt] (v0) to[bend right=20] (v3);
\fi
\ifdefined\cvr
\draw[line width=1.5pt] (v1) to[bend left=20] (v3);
\draw[line width=1.5pt] (v1) to[bend right=20] (v2);
\fi
\ifdefined\cvml
\draw[line width=1.5pt] (v2) to[bend left=60] (v3);
\fi
\ifdefined\cvmr
\draw[line width=1.5pt] (v2) to[bend right=60] (v3);
\fi
\end{scope}
\end{tikzpicture}
}$
can be represented as the Hasse diagram
$
{
\def\scale{2ex}
\begin{tikzpicture}[x=\scale,y=\scale,baseline={([yshift=-.5ex]current bounding box.center)}]
\begin{scope}[node distance=1]
\coordinate (top) ;
\coordinate [below left= of top] (v1);
\coordinate [below right= of top] (v2);
\coordinate [below left= of v2] (v3);
\coordinate [below= of v3] (bottom);
\draw (top) -- (v1);
\draw (top) -- (v2);
\draw (v1) -- (v3);
\draw (v2) -- (v3);
\draw (v3) -- (bottom);
\filldraw[fill=white, draw=black] (top) circle(2pt);
\filldraw[fill=white, draw=black] (v1) circle(2pt);
\filldraw[fill=white, draw=black] (v2) circle(2pt);
\filldraw[fill=white, draw=black] (v3) circle(2pt);
\filldraw[fill=white, draw=black] (bottom) circle(2pt);
\end{scope}
\end{tikzpicture}
}.
$
\else

MISSING IN DRAFT MODE

\fi
\end{expl}

The motivation to search for more properties of these posets came from the work of Berghoff \cite{berghoff2014wonderful}, who studied the posets of subdivergences in the context of Epstein-Glaser renormalization and discovered that the posets of diagrams with only logarithmic divergent subdivergences are distributive lattices.

An important observation to make is that the set of superficially divergent subdiagrams $\sdsubdiags_D(\Gamma)$ of a diagram $\Gamma$ is a lattice for a big class of QFTs. For convenience, we repeat the definition of a lattice here: 
\begin{defn}[Lattice]
A lattice is a poset $L$ for which an unique least upper bound (\textit{join}) and an unique greatest lower bound (\textit{meet}) exists for any combination of two elements in $L$. The join of two elements $x,y \in L$ is denoted as $x \vee y$ and the meet as $x \wedge y$.
Every lattice has a unique greatest element denoted as $\hat{1}$ and a unique smallest element $\hat{0}$. Every interval of a lattice is also a lattice.
\end{defn}

In many QFTs, $\sdsubdiags(\Gamma)$ is a lattice for every s.d.\ diagram $\Gamma$ \cite{borinsky2015}. The union of two subdiagrams will play the role of the meet.
\begin{defn}[Join-meet-renormalizable quantum field theory]
A renormalizable QFT is called join-meet-renormalizable if $\sdsubdiags_D(\Gamma)$, ordered by inclusion, is a lattice for every s.d.\ Feynman diagram $\Gamma$.
\end{defn}

It turns out to be a sufficient requirement on the set $\sdsubdiags_D(\Gamma)$ to be a lattice that it is closed under taking unions of subdiagrams. 
\begin{thm}
A renormalizable QFT is join-meet-renormalizable if $\sdsubdiags_D(\Gamma)$ is closed under taking unions: $\gamma_1, \gamma_2 \in \sdsubdiags_D(\Gamma) \Rightarrow \gamma_1 \cup \gamma_2 \in \sdsubdiags_D(\Gamma)$ for all s.d.\ diagrams $\Gamma$.
\end{thm}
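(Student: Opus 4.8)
The plan is to reduce the claim to a standard order-theoretic fact: a finite poset that has a least element and in which every pair of elements admits a least upper bound is automatically a lattice. Since $\sdsubdiags_D(\Gamma)$ is finite --- a Feynman diagram has only finitely many subdiagrams --- it suffices to exhibit a least element and to verify that all pairwise joins exist; the meets then follow for free. I would begin by pinning down the extreme elements: the empty subdiagram $\emptyset$ lies in $\sdsubdiags_D(\Gamma)$ and is contained in every other element, so it is the least element $\hat 0$, while $\Gamma$ itself is an s.d.\ subdiagram containing all the others, so it is the greatest element $\hat 1$. In particular the poset is bounded.

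Next I would use the closure hypothesis to produce the joins. For $\gamma_1,\gamma_2 \in \sdsubdiags_D(\Gamma)$ the union $\gamma_1 \cup \gamma_2$ again lies in $\sdsubdiags_D(\Gamma)$ by assumption. It is an upper bound of both, and it is the least one: any $\delta \in \sdsubdiags_D(\Gamma)$ with $\gamma_1 \subseteq \delta$ and $\gamma_2 \subseteq \delta$ contains $\gamma_1 \cup \gamma_2$ as a subgraph, hence dominates it in the inclusion order. Thus $\gamma_1 \join \gamma_2 = \gamma_1 \cup \gamma_2$ exists for every pair, so $\sdsubdiags_D(\Gamma)$ is a join-semilattice with a least element.

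Finally I would invoke the order-theoretic fact to obtain the meets: the greatest lower bound of $\gamma_1$ and $\gamma_2$ is the join of the set $\{\delta \in \sdsubdiags_D(\Gamma) : \delta \subseteq \gamma_1,\ \delta \subseteq \gamma_2\}$ of common lower bounds, which is finite and nonempty (it contains $\emptyset$), so its join exists by what was just shown and is readily checked to be the greatest lower bound. This is the sense in which a union realizes the meet. The step I expect to be the genuine crux is precisely this one, because the obvious candidate for the meet --- the set-theoretic intersection $\gamma_1 \cap \gamma_2$ --- need not itself be an s.d.\ subdiagram, since its connected components may fail to be superficially divergent or even 1PI; the meet therefore cannot be read off directly and must instead be assembled from below as a union of smaller s.d.\ subdiagrams, which is exactly where closure under unions does the essential work. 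As the argument is uniform in $\Gamma$, it establishes that the theory is join-meet-renormalizable.
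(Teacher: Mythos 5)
Your argument is correct and follows essentially the same route as the paper: both identify $\emptyset$ and $\Gamma$ as the bounds, take the union as the join (which exists by the closure hypothesis), and then invoke the standard fact that a finite bounded join-semilattice is a lattice, with the meet realized as $\gamma_1 \meet \gamma_2 = \bigcup_{\mu \leq \gamma_1,\, \mu \leq \gamma_2} \mu$. Your added remark that the set-theoretic intersection cannot serve as the meet is a useful observation the paper leaves implicit, but it does not change the substance of the proof.
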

\begin{proof}
$\sdsubdiags_D(\Gamma)$ is ordered by inclusion $\gamma_1 \leq \gamma_2 \Leftrightarrow \gamma_1 \subset \gamma_2$.
The join is given by taking the union of diagrams: $\gamma_1 \join \gamma_2 := \gamma_1 \cup \gamma_2$. $\sdsubdiags_D(\Gamma)$ has a unique greatest element $\hat{1} := \Gamma$ and a unique smallest element $\hat{0} := \emptyset$. Therefore $\sdsubdiags_D(\Gamma)$ is a lattice \cite[Prop. 3.3.1]{stanley1997}. The unique meet is given by the formula, $\gamma_1 \meet \gamma_2 := \bigcup \limits_{\mu \leq \gamma_1 \text{ and } \mu \leq \gamma_2} \mu$. 
\end{proof}

Not every Feynman diagram fulfills this requirement. A counterexample of a Feynman diagram of $\phi^6$-theory in $3$ dimensions where $\sdsubdiags_D(\Gamma)$ is not a lattice is given in figure \ref{fig:phi6nolattice}. The corresponding poset $\sdsubdiags_3(\Gamma)$ is depicted in figure \ref{fig:phi6nolatticeposet}.
{ 
\ifdefined\nodraft
\begin{figure}
  \subcaptionbox{Example of a diagram where $\sdsubdiags_3(\Gamma)$ is not a lattice.\label{fig:phi6nolattice}}
  [.45\linewidth]{
\def \scale{2em}
\begin{tikzpicture}[x=\scale,y=\scale,baseline={([yshift=-.5ex]current bounding box.center)}]
\begin{scope}[node distance=1]
\coordinate (v0);
\coordinate[right=of v0] (v4);
\coordinate[above right= of v4] (v2);
\coordinate[below right= of v4] (v3);
\coordinate[below right= of v2] (v5);
\coordinate[right=of v5] (v1);
\coordinate[above right= of v2] (o1);
\coordinate[below right= of v2] (o2);
\coordinate[below left=.5 of v0] (i1);
\coordinate[above left=.5 of v0] (i2);
\coordinate[below right=.5 of v1] (o1);
\coordinate[above right=.5 of v1] (o2);
\draw (v0) -- (i1);
\draw (v0) -- (i2);
\draw (v1) -- (o1);
\draw (v1) -- (o2);
\draw (v0) to[bend left=20] (v2);
\draw (v0) to[bend left=45] (v2);
\draw (v0) to[bend right=45] (v3);
\draw (v0) to[bend right=20] (v3);
\draw (v1) to[bend left=20] (v3);
\draw (v1) to[bend left=45] (v3);
\draw (v1) to[bend right=45] (v2);
\draw (v1) to[bend right=20] (v2);
\draw (v2) to[bend right=20] (v4);
\draw (v2) to[bend left=20] (v5);
\draw (v3) to[bend right=20] (v5);
\draw (v3) to[bend left=20] (v4);
\draw (v4) to[bend left=45] (v5);
\draw (v4) to[bend left=15] (v5);
\draw (v4) to[bend right=45] (v5);
\draw (v4) to[bend right=15] (v5);
\filldraw (v0) circle(1pt);
\filldraw (v1) circle(1pt);
\filldraw (v2) circle(1pt);
\filldraw (v3) circle(1pt);
\filldraw (v4) circle(1pt);
\filldraw (v5) circle(1pt);
\end{scope}
\end{tikzpicture}
  }
  \subcaptionbox{The corresponding non-lattice poset. Trivial vertex multiplicities were omitted. \label{fig:phi6nolatticeposet}}
  [.45\linewidth]{
\def \scale{1em}
\begin{tikzpicture}[x=\scale,y=\scale,baseline={([yshift=-.5ex]current bounding box.center)}]
\begin{scope}[node distance=1]
\coordinate (top) ;
\coordinate [below left=2 and 2 of top] (v1) ;
\coordinate [below left=2 and 1 of top] (v2) ;
\coordinate [below =2 of top] (v4) ;
\coordinate [below right= 2 and 1 of top] (v5) ;
\coordinate [below left = 4 and 1 of top] (w1) ;
\coordinate [below= 4 of top] (w2) ;
\coordinate [below left = 5 and .5 of top] (u1) ;
\coordinate [below right =3 and 3 of top] (u2) ;
\coordinate [below = 6 of top] (s) ;
\coordinate [below = 7 of top] (t) ;
\draw (top) -- (v1);
\draw (top) -- (v2);
\draw (top) -- (u2);
\draw (top) -- (v4);
\draw (top) -- (v5);
\draw[color=white,line width=4pt] (v4) -- (w1);
\draw (v4) -- (w1);
\draw[color=white,line width=4pt] (v2) -- (w2);
\draw (v2) -- (w1);
\draw (v2) -- (w2);
\draw (v4) -- (w2);
\draw (v1) -- (w1);
\draw (v5) -- (w2);
\draw[color=white,line width=4pt] (w1) -- (u1);
\draw[color=white,line width=4pt] (w2) -- (u1);
\draw (w1) -- (u1);
\draw (w2) -- (u1);
\draw (u1) -- (s);
\draw (u2) -- (s);
\draw (s) -- (t);
\filldraw[fill=white, draw=black] (top) circle(2pt);
\filldraw[fill=white, draw=black] (v1) circle(2pt);
\filldraw[fill=white, draw=black] (v2) circle(2pt);
\filldraw[fill=white, draw=black] (v4) circle(2pt);
\filldraw[fill=white, draw=black] (v5) circle(2pt);
\filldraw[fill=white, draw=black] (w1) circle(2pt);
\filldraw[fill=white, draw=black] (w2) circle(2pt);
\filldraw[fill=white, draw=black] (u1) circle(2pt);
\filldraw[fill=white, draw=black] (u2) circle(2pt);
\filldraw[fill=white, draw=black] (s) circle(2pt);
\filldraw[fill=white, draw=black] (t) circle(2pt);
\end{scope}
\end{tikzpicture}
  }
  \caption{Counter-example for a renormalizable but not join-meet-renormalizable QFT: $\phi^6$-theory in $3$ dimensions.}
  \label{fig:nolattice}
\end{figure}
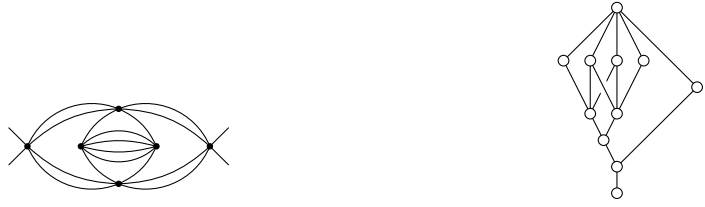
\else

MISSING IN DRAFT MODE

\fi
}

On the other hand, there is a large class of join-meet-renormalizable quantum field theories which includes the standard model as established by the following theorem:
\begin{thm}{\cite[Corr. 2]{borinsky2015}}
All renormalizable QFTs with only four-or-less-valent vertices are join-meet-renormalizable.
\end{thm}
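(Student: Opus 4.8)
The plan is to derive the statement from the union-closure criterion proved above (Theorem~1): it suffices to show that for every s.d.\ diagram $\Gamma$ and all $\gamma_1,\gamma_2\in\sdsubdiags_D(\Gamma)$ the edge-union $\gamma_1\cup\gamma_2$ again lies in $\sdsubdiags_D(\Gamma)$. Two things must be checked, that each connected component of $\gamma_1\cup\gamma_2$ is 1PI and that each such component is superficially divergent. The first I would dispose of without using the valence hypothesis: if $e$ were a bridge of a component $C$ of $\gamma_1\cup\gamma_2$, then $e$ lies in $\gamma_1$ or $\gamma_2$, say in the 1PI component $A$ of $\gamma_1$; as $A$ is bridgeless there is a cycle of $A\subseteq C$ through $e$, contradicting that $e$ is a bridge. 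Hence every component is automatically 1PI. Since $\omega_D$ of a disjoint product is the sum over components, I may restrict all edge sets to a single component of $\gamma_1\cup\gamma_2$ and assume henceforth that $\gamma_1\cup\gamma_2$ is connected; the task reduces to proving $\omega_D(\gamma_1\cup\gamma_2)\le 0$.

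Next I would set up the power-counting bookkeeping. For a renormalizable theory the degree of divergence depends only on the residue and can be written as
\[
\omega_D(\gamma)=\sum_{f\in \text{ext}(\gamma)}[\phi_f]+\sum_{v\in V(\gamma)}\delta_v-D ,
\]
where $[\phi_f]$ is the mass dimension of the field on the external leg $f$ and $\delta_v\ge 0$ is the (non-negative, by renormalizability) mass dimension of the coupling at $v$; equivalently $-\omega_D=D\,h_1(\gamma)-\sum_e s_e+\sum_v d_v$ with $d_v$ the number of derivatives at $v$. Writing $\omega_D^{\star}(g):=\sum_{\text{1PI comp.\ of }g}\omega_D$ for the additive extension, both vertex and edge sums are modular in the edge set while $h_1$, being the nullity of the graphic matroid of $\Gamma$, is supermodular, $h_1(\gamma_1\cup\gamma_2)+h_1(\gamma_1\cap\gamma_2)\ge h_1(\gamma_1)+h_1(\gamma_2)$. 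Inclusion–exclusion then gives
\[
\omega_D(\gamma_1\cup\gamma_2)=\omega_D^{\star}(\gamma_1)+\omega_D^{\star}(\gamma_2)-\omega_D^{\star}(\gamma_1\cap\gamma_2)-D\rho+\Sigma_d ,
\]
where $\rho\ge 0$ is the supermodularity defect of $h_1$ and $\Sigma_d=\sum_{v\in (V(\gamma_1)\cap V(\gamma_2))\setminus V(\gamma_1\cap\gamma_2)}d_v$ collects the derivative numbers of vertices shared by $\gamma_1,\gamma_2$ that lie on no common edge. As $\omega_D^{\star}(\gamma_1),\omega_D^{\star}(\gamma_2)\le 0$, it suffices to show $\Sigma_d=0$ and $\omega_D^{\star}(\gamma_1\cap\gamma_2)+D\rho\ge 0$.

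This is where the valence hypothesis enters. Any vertex $v$ shared by $\gamma_1$ and $\gamma_2$ but carrying no common edge is incident to at least two edges of each subgraph (a degree-one incidence would be a bridge, contradicting 1PI-ness), and these incidences are disjoint, so $\val(v)\ge 4$. With $\val(v)\le 4$ this forces $v$ to be exactly four-valent, and renormalizability ($\delta_v\ge 0$) then forces a four-valent vertex to be purely bosonic and derivative-free; hence every such vertex contributes nothing and $\Sigma_d=0$. The same dimensional balance $[\phi_{\text{legs}}(v)]+\delta_v=D-d_v\le D$ at four-or-less-valent vertices is what I would use to control the remaining term.

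The hard part will be the inequality $\omega_D^{\star}(\gamma_1\cap\gamma_2)+D\rho\ge 0$: it expresses that the over-subtracted divergence of the intersection is paid for by the extra loops created when $\gamma_1$ and $\gamma_2$ re-merge, i.e.\ that the superficial degree of divergence is supermodular on $\sdsubdiags_D(\Gamma)$. I expect to prove it by the same bookkeeping, bounding how divergent each 1PI component of $\gamma_1\cap\gamma_2$ can be (using that it sits inside the s.d.\ graphs $\gamma_1,\gamma_2$, so no component can be more divergent than the vertex dimension $D-d_v$ permits) and matching each independent re-merging cycle, which contributes a unit to $\rho$ and thus $D$ to the loop budget. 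The four-valence bound is precisely what makes this matching balance in the right direction; its failure at a six-valent vertex, where $\sum_{\text{legs}}[\phi]=D$ already and two divergent subgraphs can meet in two vertices to produce a convergent union, is exactly the $\phi^6$-in-$3$-dimensions counterexample of Figure~\ref{fig:nolattice}.
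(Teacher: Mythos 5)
The paper does not actually prove this theorem --- it only cites \cite[Corr.~2]{borinsky2015} --- so your attempt must be measured against the proof in that reference, which establishes closure of $\sdsubdiags_D(\Gamma)$ under unions by directly analysing how the external half-edges of $\gamma_1$, $\gamma_2$ and of the components of $\gamma_1\cap\gamma_2$ can be distributed over the at-most-four-valent shared vertices. Your opening moves are sound: reducing to the union-closure criterion of Theorem~1, observing that every component of $\gamma_1\cup\gamma_2$ is automatically bridgeless, and the inclusion--exclusion identity built from the modularity of the edge and vertex sums and the supermodularity defect $\rho$ of $h_1$ (modulo some care about whether $\gamma_1\cap\gamma_2$ carries all shared vertices or only the endpoints of shared edges, which is what your $\Sigma_d$ is compensating for, and about $\omega_D^{\star}$ versus $\omega_D$ when the intersection has bridges).

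The genuine gap is that the argument stops exactly where all of the content of the theorem sits. The inequality $\omega_D^{\star}(\gamma_1\cap\gamma_2)+D\rho\geq 0$ is announced as ``the hard part'' and then only motivated by a heuristic about divergences being ``paid for'' by re-merging cycles; no proof is given. This is not a routine bookkeeping step: without the valence hypothesis nothing prevents a connected intersection ($\rho=0$) from containing a superficially divergent propagator-type component $\kappa$ with $\omega_D(\kappa)<0$, which would violate your inequality outright, so the four-valence bound must be invoked a \emph{second} time here. Concretely, one has to show that a component $\kappa$ of $\gamma_1\cap\gamma_2$ with $\omega_D(\kappa)<0$ has, by renormalizability, so few external half-edges that it cannot be attached $2$-edge-connectedly to both $\gamma_1\setminus\kappa$ and $\gamma_2\setminus\kappa$ through distinct half-edges at $\leq 4$-valent vertices --- and one has to match each additional component of the intersection against the unit it contributes to $\rho$. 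You gesture at this (``using that it sits inside the s.d.\ graphs $\gamma_1,\gamma_2$'') but never carry out the half-edge count, so the proof is incomplete precisely at the step where the hypothesis distinguishing four-valent theories from the $\phi^6$ counterexample of Figure~\ref{fig:nolattice} has to do its work.
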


This a surprising result. Lattices are very well studied objects in combinatorics. It is worthwhile to search for more properties which the lattices in physical QFTs carry. But first, we will look how the Hopf algebra and the lattice structure fit together.

\section{The Hopf algebra of decorated lattices}
It is well known that lattices and posets can be equipped with Hopf algebra structures \cite{Schmitt1994}. The Hopf algebra structure applicable in the present case is the following decorated version of an incidence Hopf algebra:
\begin{defn}[Hopf algebra of decorated posets]
Let $\mathcal{D}$ be the set of tuples $(P, \nu)$, where $P$ is a finite poset with a unique lower bound $\hat{0}$ and a unique upper bound $\hat{1}$ and a strictly order preserving map $\nu: P \rightarrow \mathbb{N}_0$ with $\nu(\hat{0}) = 0$. 
One can think of $\mathcal{D}$ as the set of bounded posets augmented by a strictly order preserving decoration.
An equivalence relation is set up on $\mathcal{D}$ by relating $(P_1, \nu_1) \sim (P_2, \nu_2)$ if there is an isomorphism $j: P_1 \rightarrow P_2$, which respects the decoration $\nu$: $\nu_1 = \nu_2 \circ j$. 

Let $\hopfpos$ be the $\mathbb{Q}$-algebra generated by all the elements in the quotient $\mathcal{P}/\sim$ with the commutative multiplication: 
\begin{align}
&m_{\hopfpos}: & 
\hopfpos &\otimes \hopfpos & 
&\rightarrow&
&\hopfpos, \\
&&  
(P_1, \nu_1) &\otimes (P_2, \nu_2) & 
&\mapsto& 
&\left( P_1 \times P_2, \nu_1 + \nu_2 \right),
\end{align}
which takes the Cartesian product of the two posets and adds the decorations $\nu$. The sum of the two functions $\nu_1$ and $\nu_2$ is to be interpreted in the sense: $(\nu_1 + \nu_2)(x,y) = \nu_1(x) + \nu_2(y)$. The singleton poset $P=\left\{\hat{0}\right\}$ with $\hat{0}=\hat{1}$ and the trivial decoration $\nu(\hat{0}) = 0$ serves as a multiplicative unit: $\unit(1) = \mathbb{I}_{\hopfpos} := (\left\{\hat{0}\right\}, \hat{0} \mapsto 0)$.

Equipped with the coproduct,
\begin{align}
    &\Delta_{\hopfpos}: &
    &\hopfpos& 
    &\rightarrow& 
    \hopfpos &\otimes \hopfpos, \\
\label{eqn:poset_cop}
  && &(P, \nu)& 
  &\mapsto&
  \sum \limits_{ x \in P } 
    ( [ \hat{0}, x ], \nu ) &\otimes \left( [x, \hat{1}], \nu - \nu(x) \right),
\end{align}
where $(\nu - \nu(x))(y) = \nu(y) - \nu(x)$ 
and the counit $\counit$ which vanishes on every generator except $\mathbb{I}_{\hopfpos}$, the algebra $\hopfpos$ becomes a counital coalgebra. 
\end{defn}
This algebra and coalgebra is in fact a Hopf algebra \cite{borinsky2015} which augments the corresponding incidence Hopf algebra by a decoration. The decoration is needed to capture at least the simplest invariant of a diagram: The loop number.

Having defined the Hopf algebra, we can setup a Hopf algebra morphism from $\hopffg_D$ to $\hopflat$:
\begin{thm}{\cite[Thm. 3]{borinsky2015}}
Let $\nu(\gamma) = h_1(\gamma)$.
\label{thm:hopf_alg_morph}
The map,
\begin{align}
    &\chi_D:& &\hopffg_D& &\rightarrow& &\hopfpos, \\
      &&       &\Gamma& &\mapsto& &( \sdsubdiags_D(\Gamma), \nu ),
\end{align}
which assigns to every diagram, its poset of s.d.\ subdiagrams decorated by the loop number of the subdiagram, is a Hopf algebra morphism.
\end{thm}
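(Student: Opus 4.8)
The plan is to show that $\chi_D$ is a morphism of bialgebras, i.e.\ that it intertwines both products and both coproducts and respects units and counits. Since $\hopffg_D$ and $\hopfpos$ are connected graded bialgebras (graded by loop number, respectively by the decoration $\nu$ of the top element), any bialgebra morphism between them automatically commutes with the antipodes, so the bialgebra statement suffices to conclude that $\chi_D$ is a Hopf algebra morphism. First I would record that $\chi_D$ is well defined: $\sdsubdiags_D(\Gamma)$ is a finite bounded poset with $\hat 0 = \emptyset$ and $\hat 1 = \Gamma$, and $\nu = h_1$ is strictly order preserving because a cover $\mu_1 \lessdot \mu_2$ means $\mu_2/\mu_1$ is primitive, whence $h_1(\mu_2) - h_1(\mu_1) = h_1(\mu_2/\mu_1) \geq 1$; together with $h_1(\emptyset)=0$ this places $\chi_D(\Gamma)$ in $\mathcal{D}$ (note that no lattice property is needed, since $\hopfpos$ is defined on all bounded decorated posets).

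For the algebra part, recall that the product on $\hopffg_D$ is disjoint union of diagrams. Every $1$PI component of $\Gamma_1 \sqcup \Gamma_2$ lies entirely in $\Gamma_1$ or in $\Gamma_2$, so an s.d.\ subdiagram $\gamma \in \sdsubdiags_D(\Gamma_1 \sqcup \Gamma_2)$ splits uniquely as $\gamma = \gamma^{(1)} \sqcup \gamma^{(2)}$ with $\gamma^{(j)} \in \sdsubdiags_D(\Gamma_j)$. The assignment $\gamma \mapsto (\gamma^{(1)}, \gamma^{(2)})$ is a poset isomorphism onto $\sdsubdiags_D(\Gamma_1) \times \sdsubdiags_D(\Gamma_2)$, and since $h_1$ is additive under disjoint union the decorations match the sum $\nu_1 + \nu_2$; this is exactly the product $\chi_D(\Gamma_1)\cdot\chi_D(\Gamma_2)$ in $\hopfpos$. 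The unit goes to the unit, as $\sdsubdiags_D(\emptyset) = \{\emptyset\}$ with $\nu(\emptyset)=0$ equals $\mathbb{I}_{\hopfpos}$. The counit is equally immediate: $\counit$ on $\hopffg_D$ vanishes on every nonempty diagram, $\counit$ on $\hopfpos$ vanishes on every generator except $\mathbb{I}_{\hopfpos}$, and $\chi_D(\Gamma)=\mathbb{I}_{\hopfpos}$ only for empty $\Gamma$, so the two counits agree.

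The heart of the argument is the coalgebra compatibility $\Delta_{\hopfpos} \circ \chi_D = (\chi_D \otimes \chi_D) \circ \Delta$. Expanding the left-hand side with \eqref{eqn:poset_cop} gives $\sum_{\gamma \in \sdsubdiags_D(\Gamma)} ([\emptyset,\gamma],\nu) \otimes ([\gamma,\Gamma], \nu - \nu(\gamma))$, while the right-hand side, using \eqref{eqn:def_cop}, gives $\sum_{\gamma \in \sdsubdiags_D(\Gamma)} (\sdsubdiags_D(\gamma),\nu) \otimes (\sdsubdiags_D(\Gamma/\gamma),\nu)$. Both sums are indexed by the same set, so I would match them term by term through two isomorphisms of decorated posets. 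The first, $[\emptyset,\gamma] \cong \sdsubdiags_D(\gamma)$, holds on the nose: superficial divergence and $1$PI-ness are intrinsic properties of a subdiagram, so the s.d.\ subdiagrams of $\Gamma$ contained in $\gamma$ are exactly the s.d.\ subdiagrams of $\gamma$, and the $h_1$-decorations coincide.

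The second isomorphism, $[\gamma,\Gamma] \cong \sdsubdiags_D(\Gamma/\gamma)$ via contraction $\eta \mapsto \eta/\gamma$, is the main obstacle and carries the real content. That this is an order-preserving bijection sending s.d.\ subdiagrams to s.d.\ subdiagrams is precisely what coassociativity of the $\hopffg_D$ coproduct encodes: comparing $(\Delta \otimes \id)\Delta$ with $(\id \otimes \Delta)\Delta$ on $\Gamma$ and matching the tensor slots identifies, for each fixed $\gamma$, the sum over $\rho \in \sdsubdiags_D(\Gamma/\gamma)$ with the sum over $\eta \in \sdsubdiags_D(\Gamma)$ obeying $\gamma \subset \eta$, via $\rho = \eta/\gamma$ together with the third-isomorphism identity $(\Gamma/\gamma)/(\eta/\gamma) = \Gamma/\eta$. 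I would therefore invoke this coassociativity, already available since $\hopffg_D$ is a Hopf algebra, to obtain the bijection and its compatibility with the s.d.\ structure, checking order preservation directly from the edge-set description of contraction. Finally, the decoration matches after the prescribed shift, since contracting $\gamma$ inside $\eta$ removes exactly $h_1(\gamma)$ independent loops, giving $h_1(\eta/\gamma) = h_1(\eta) - h_1(\gamma) = (\nu - \nu(\gamma))(\eta)$, which is the decoration on the right factor of \eqref{eqn:poset_cop}. With both interval isomorphisms established, the coproducts agree term by term and the bialgebra—hence Hopf algebra—morphism property follows.
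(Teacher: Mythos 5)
The paper does not actually prove this theorem; it is imported verbatim from the cited reference, so there is no in-paper argument to compare against. Your proof is the standard one and is essentially correct: multiplicativity via the splitting of s.d.\ subdiagrams of a disjoint union, the two interval isomorphisms $[\emptyset,\gamma]\cong\sdsubdiags_D(\gamma)$ and $[\gamma,\Gamma]\cong\sdsubdiags_D(\Gamma/\gamma)$ for the coproduct, the loop-number identity $h_1(\eta/\gamma)=h_1(\eta)-h_1(\gamma)$ for the decoration shift, and the connected-graded argument to upgrade from bialgebra to Hopf algebra morphism (noting, as you implicitly do, that $\nu(\hat 1)=0$ forces $\hat 0=\hat 1$, so $\hopfpos$ is indeed connected in the $\nu(\hat 1)$-grading). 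The one place you should be more careful is the derivation of the second isomorphism ``from coassociativity'': an equality of two sums of tensors does not by itself yield a term-by-term bijection of the index sets, since distinct indices can contribute identical tensors. The clean route is to establish the bijection directly --- edges of $\Gamma/\gamma$ are the edges of $\Gamma$ outside $\gamma$, so $\rho\mapsto\rho\cup\gamma$ and $\eta\mapsto\eta/\gamma$ are mutually inverse order-preserving maps, and they exchange s.d.\ subdiagrams because in a renormalizable theory $\omega_D$ of a 1PI piece depends only on its external leg structure, which contraction of an internal s.d.\ subdiagram does not alter --- which is the same lemma that underlies coassociativity of $\Delta_D$ in the first place. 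With that substitution your argument is complete.
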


Because of the special structure of $\sdsubdiags_D(\Gamma)$ in join-meet-renormalizable theories, it follows immediately that:
\begin{crll}
\label{crll:joinmeethopflat}
In a join-meet-renormalizable QFT, $\im(\chi_D) \subset \hopflat \subset \hopfpos$, where $\hopflat$ is the subspace of $\hopfpos$ which is generated by all elements $(L, \nu)$, where $L$ is a lattice. In other words: In a join-meet-renormalizable QFT, $\chi_D$ maps s.d.\ diagrams and products of them to decorated lattices. 
\end{crll}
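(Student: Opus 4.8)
The plan is to reduce the claim to its generators by using that $\chi_D$ is an algebra morphism, which is part of the content of Theorem~\ref{thm:hopf_alg_morph}. Since $\hopffg_D$ is generated as a $\mathbb{Q}$-algebra by Feynman diagrams, the image $\im(\chi_D)$ is spanned by the images of products of diagrams; because $\hopflat$ is a subspace, by linearity it suffices to show that $\chi_D(\Gamma_1 \cdots \Gamma_n) \in \hopflat$ for every finite family of s.d.\ diagrams $\Gamma_1, \ldots, \Gamma_n$. The inclusion $\hopflat \subset \hopfpos$ is immediate from the definition of $\hopflat$ as a subspace.

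First I would handle a single generator. For an s.d.\ diagram $\Gamma$, the definition of a join-meet-renormalizable QFT states precisely that $\sdsubdiags_D(\Gamma)$ is a lattice; hence $\chi_D(\Gamma) = (\sdsubdiags_D(\Gamma), \nu)$ is one of the distinguished generators $(L, \nu)$ of $\hopflat$ with $L$ a lattice, so $\chi_D(\Gamma) \in \hopflat$. The unit, i.e.\ the empty diagram, maps to $\mathbb{I}_{\hopfpos}$, the singleton poset, which is trivially a lattice.

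Next I would pass to products using multiplicativity together with the definition of $m_{\hopfpos}$, giving $\chi_D(\Gamma_1 \cdots \Gamma_n) = (\sdsubdiags_D(\Gamma_1) \times \cdots \times \sdsubdiags_D(\Gamma_n), \nu_1 + \cdots + \nu_n)$. The decisive step is the lemma that a Cartesian product of lattices, ordered componentwise, is again a lattice: meet and join are computed coordinatewise, $(x_i)_i \join (y_i)_i = (x_i \join y_i)_i$ and analogously for $\meet$, the elements $(\hat{0}_i)_i$ and $(\hat{1}_i)_i$ furnish the unique bounds $\hat{0}$ and $\hat{1}$, and the summed decoration $\nu_1 + \cdots + \nu_n$ remains strictly order preserving. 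Consequently $\chi_D(\Gamma_1 \cdots \Gamma_n)$ is again a decorated lattice and lies in $\hopflat$.

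The hard part is really only this closure of the class of lattices under Cartesian products; it is standard lattice theory, but it is the one nontrivial ingredient, since it is exactly what makes $\hopflat$ a subalgebra of $\hopfpos$ and thereby lets the single-generator argument propagate along products to all of $\im(\chi_D)$. Everything else is bookkeeping: recording the coordinatewise operations, checking the bounds and that the decoration stays strictly monotone, and invoking linearity to finish.
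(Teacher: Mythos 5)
Your proposal is correct and matches the paper's (implicit) argument: the paper presents this as an immediate consequence of the definition of join-meet-renormalizability together with the multiplicativity of $\chi_D$, which is exactly the reduction you carry out. The only ingredient you make explicit that the paper leaves tacit is the standard fact that a Cartesian product of bounded lattices is again a bounded lattice with componentwise meet and join, and that the summed decoration stays strictly order preserving; this is the right lemma and your verification of it is fine.
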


\begin{expl}
For any primitive 1PI diagram, i.e.\ $\Gamma \in \ker \widetilde \Delta$,
\ifdefined\nodraft
\begin{align}
\chi_D( \Gamma ) = ( \sdsubdiags_D(\Gamma), \nu ) = 
{\def \scale {4ex}
\begin{tikzpicture}[x=\scale,y=\scale,baseline={([yshift=-.5ex]current bounding box.center)}]
\begin{scope}[node distance=1]
\coordinate (top) ;
\coordinate [below= of top] (bottom);
\draw (top) -- (bottom);
\filldraw[fill=white, draw=black,circle] (top) node[fill,circle,draw,inner sep=1pt]{$L$};
\filldraw[fill=white, draw=black,circle] (bottom) node[fill,circle,draw,inner sep=1pt]{$0$};
\end{scope}
\end{tikzpicture}
},
\end{align}
where the vertices in the Hasse diagram are decorated by the value of $\nu$ and 
$L = h_1(\Gamma)$ is the loop number of the primitive diagram. 

The coproduct of $\chi_D(\Gamma)$ in $\hopfpos$ can be calculated using eq. \ref{eqn:poset_cop}:
\begin{align}
\label{eqn:explcopposet}
\Delta_{\hopfpos} 
{\def \scale {4ex}
\begin{tikzpicture}[x=\scale,y=\scale,baseline={([yshift=-.5ex]current bounding box.center)}]
\begin{scope}[node distance=1]
\coordinate (top) ;
\coordinate [below= of top] (bottom);
\draw (top) -- (bottom);
\filldraw[fill=white, draw=black,circle] (top) node[fill,circle,draw,inner sep=1pt]{$L$};
\filldraw[fill=white, draw=black,circle] (bottom) node[fill,circle,draw,inner sep=1pt]{$0$};
\end{scope}
\end{tikzpicture}
} = 
{\def \scale {4ex}
\begin{tikzpicture}[x=\scale,y=\scale,baseline={([yshift=-.5ex]current bounding box.center)}]
\begin{scope}[node distance=1]
\coordinate (top) ;
\coordinate [below= of top] (bottom);
\draw (top) -- (bottom);
\filldraw[fill=white, draw=black,circle] (top) node[fill,circle,draw,inner sep=1pt]{$L$};
\filldraw[fill=white, draw=black,circle] (bottom) node[fill,circle,draw,inner sep=1pt]{$0$};
\end{scope}
\end{tikzpicture}
}
\otimes 
\mathbb{I}
+
\mathbb{I}
\otimes
{\def \scale {4ex}
\begin{tikzpicture}[x=\scale,y=\scale,baseline={([yshift=-.5ex]current bounding box.center)}]
\begin{scope}[node distance=1]
\coordinate (top) ;
\coordinate [below= of top] (bottom);
\draw (top) -- (bottom);
\filldraw[fill=white, draw=black,circle] (top) node[fill,circle,draw,inner sep=1pt]{$L$};
\filldraw[fill=white, draw=black,circle] (bottom) node[fill,circle,draw,inner sep=1pt]{$0$};
\end{scope}
\end{tikzpicture}
}.
\end{align}
As expected, these decorated posets are also primitive in $\hopfpos$.
\else

MISSING IN DRAFT MODE

\fi
\end{expl}
\begin{expl}
For the diagram
\ifdefined\nodraft
$
{\def \scale {1.5ex}
\begin{tikzpicture}[x=\scale,y=\scale,baseline={([yshift=-.5ex]current bounding box.center)}]
\begin{scope}[node distance=1]
\coordinate (v0);
\coordinate[right=.5 of v0] (v4);
\coordinate[above right= of v4] (v2);
\coordinate[below right= of v4] (v3);
\coordinate[below right= of v2] (v5);
\coordinate[right=.5 of v5] (v1);
\coordinate[above right= of v2] (o1);
\coordinate[below right= of v2] (o2);
\coordinate[below left=.5 of v0] (i1);
\coordinate[above left=.5 of v0] (i2);
\coordinate[below right=.5 of v1] (o1);
\coordinate[above right=.5 of v1] (o2);
\draw (v0) -- (i1);
\draw (v0) -- (i2);
\draw (v1) -- (o1);
\draw (v1) -- (o2);
\draw (v0) to[bend left=20] (v2);
\draw (v0) to[bend right=20] (v3);
\draw (v1) to[bend left=20] (v3);
\draw (v1) to[bend right=20] (v2);
\draw (v2) to[bend right=60] (v3);
\draw (v2) to[bend left=60] (v3);
\filldraw (v0) circle(1pt);
\filldraw (v1) circle(1pt);
\filldraw (v2) circle(1pt);
\filldraw (v3) circle(1pt);
\ifdefined\cvl
\draw[line width=1.5pt] (v0) to[bend left=20] (v2);
\draw[line width=1.5pt] (v0) to[bend right=20] (v3);
\fi
\ifdefined\cvr
\draw[line width=1.5pt] (v1) to[bend left=20] (v3);
\draw[line width=1.5pt] (v1) to[bend right=20] (v2);
\fi
\ifdefined\cvml
\draw[line width=1.5pt] (v2) to[bend left=60] (v3);
\fi
\ifdefined\cvmr
\draw[line width=1.5pt] (v2) to[bend right=60] (v3);
\fi
\end{scope}
\end{tikzpicture}
} \in \hopffg_4$,
$\chi_D$ gives the decorated poset,
\begin{align}
\chi_D\left( 
{\def \scale {4ex}
\begin{tikzpicture}[x=\scale,y=\scale,baseline={([yshift=-.5ex]current bounding box.center)}]
\begin{scope}[node distance=1]
\coordinate (v0);
\coordinate[right=.5 of v0] (v4);
\coordinate[above right= of v4] (v2);
\coordinate[below right= of v4] (v3);
\coordinate[below right= of v2] (v5);
\coordinate[right=.5 of v5] (v1);
\coordinate[above right= of v2] (o1);
\coordinate[below right= of v2] (o2);
\coordinate[below left=.5 of v0] (i1);
\coordinate[above left=.5 of v0] (i2);
\coordinate[below right=.5 of v1] (o1);
\coordinate[above right=.5 of v1] (o2);
\draw (v0) -- (i1);
\draw (v0) -- (i2);
\draw (v1) -- (o1);
\draw (v1) -- (o2);
\draw (v0) to[bend left=20] (v2);
\draw (v0) to[bend right=20] (v3);
\draw (v1) to[bend left=20] (v3);
\draw (v1) to[bend right=20] (v2);
\draw (v2) to[bend right=60] (v3);
\draw (v2) to[bend left=60] (v3);
\filldraw (v0) circle(1pt);
\filldraw (v1) circle(1pt);
\filldraw (v2) circle(1pt);
\filldraw (v3) circle(1pt);
\ifdefined\cvl
\draw[line width=1.5pt] (v0) to[bend left=20] (v2);
\draw[line width=1.5pt] (v0) to[bend right=20] (v3);
\fi
\ifdefined\cvr
\draw[line width=1.5pt] (v1) to[bend left=20] (v3);
\draw[line width=1.5pt] (v1) to[bend right=20] (v2);
\fi
\ifdefined\cvml
\draw[line width=1.5pt] (v2) to[bend left=60] (v3);
\fi
\ifdefined\cvmr
\draw[line width=1.5pt] (v2) to[bend right=60] (v3);
\fi
\end{scope}
\end{tikzpicture}
} \right) = 
{\def \scale {4ex}
\begin{tikzpicture}[x=\scale,y=\scale,baseline={([yshift=-.5ex]current bounding box.center)}]
\begin{scope}[node distance=1]
\coordinate (top) ;
\coordinate [below left= of top] (v1);
\coordinate [below right= of top] (v2);
\coordinate [below left= of v2] (v3);
\coordinate [below= of v3] (bottom);
\draw (top) -- (v1);
\draw (top) -- (v2);
\draw (v1) -- (v3);
\draw (v2) -- (v3);
\draw (v3) -- (bottom);
\filldraw[fill=white, draw=black,circle] (top) node[fill,circle,draw,inner sep=1pt]{$3$};
\filldraw[fill=white, draw=black,circle] (v1) node[fill,circle,draw,inner sep=1pt]{$2$};
\filldraw[fill=white, draw=black,circle] (v2) node[fill,circle,draw,inner sep=1pt]{$2$};
\filldraw[fill=white, draw=black,circle] (v3) node[fill,circle,draw,inner sep=1pt]{$1$};
\filldraw[fill=white, draw=black,circle] (bottom) node[fill,circle,draw,inner sep=1pt]{$0$};
\end{scope}
\end{tikzpicture}
},
\end{align}
\else

MISSING IN DRAFT MODE

\fi
\ifdefined\nodraft
of which the reduced coproduct in $\hopfpos$ can be calculated,
\begin{align}
\label{eqn:explcopposet2}
\widetilde{\Delta}_{\hopfpos}
{\def \scale {4ex}
\begin{tikzpicture}[x=\scale,y=\scale,baseline={([yshift=-.5ex]current bounding box.center)}]
\begin{scope}[node distance=1]
\coordinate (top) ;
\coordinate [below left= of top] (v1);
\coordinate [below right= of top] (v2);
\coordinate [below left= of v2] (v3);
\coordinate [below= of v3] (bottom);
\draw (top) -- (v1);
\draw (top) -- (v2);
\draw (v1) -- (v3);
\draw (v2) -- (v3);
\draw (v3) -- (bottom);
\filldraw[fill=white, draw=black,circle] (top) node[fill,circle,draw,inner sep=1pt]{$3$};
\filldraw[fill=white, draw=black,circle] (v1) node[fill,circle,draw,inner sep=1pt]{$2$};
\filldraw[fill=white, draw=black,circle] (v2) node[fill,circle,draw,inner sep=1pt]{$2$};
\filldraw[fill=white, draw=black,circle] (v3) node[fill,circle,draw,inner sep=1pt]{$1$};
\filldraw[fill=white, draw=black,circle] (bottom) node[fill,circle,draw,inner sep=1pt]{$0$};
\end{scope}
\end{tikzpicture}
} = 
2~
{\def \scale {4ex}
\begin{tikzpicture}[x=\scale,y=\scale,baseline={([yshift=-.5ex]current bounding box.center)}]
\begin{scope}[node distance=1]
\coordinate (top) ;
\coordinate [below= of top] (v1);
\coordinate [below= of v1] (bottom);
\draw (top) -- (v1);
\draw (v1) -- (bottom);
\filldraw[fill=white, draw=black,circle] (top) node[fill,circle,draw,inner sep=1pt]{$2$};
\filldraw[fill=white, draw=black,circle] (v1) node[fill,circle,draw,inner sep=1pt]{$1$};
\filldraw[fill=white, draw=black,circle] (bottom) node[fill,circle,draw,inner sep=1pt]{$0$};
\end{scope}
\end{tikzpicture}
}
\otimes
{
\def \scale {4ex}
\begin{tikzpicture}[x=\scale,y=\scale,baseline={([yshift=-.5ex]current bounding box.center)}]
\begin{scope}[node distance=1]
\coordinate (top) ;
\coordinate [below= of top] (bottom);
\draw (top) -- (bottom);
\filldraw[fill=white, draw=black,circle] (top) node[fill,circle,draw,inner sep=1pt]{$1$};
\filldraw[fill=white, draw=black,circle] (bottom) node[fill,circle,draw,inner sep=1pt]{$0$};
\end{scope}
\end{tikzpicture}
}
+
{
\def \scale {4ex}
\begin{tikzpicture}[x=\scale,y=\scale,baseline={([yshift=-.5ex]current bounding box.center)}]
\begin{scope}[node distance=1]
\coordinate (top) ;
\coordinate [below= of top] (bottom);
\draw (top) -- (bottom);
\filldraw[fill=white, draw=black,circle] (top) node[fill,circle,draw,inner sep=1pt]{$1$};
\filldraw[fill=white, draw=black,circle] (bottom) node[fill,circle,draw,inner sep=1pt]{$0$};
\end{scope}
\end{tikzpicture}
}
\otimes
{
\def \scale {4ex}
\begin{tikzpicture}[x=\scale,y=\scale,baseline={([yshift=-.5ex]current bounding box.center)}]
\begin{scope}[node distance=1]
\coordinate (top) ;
\coordinate [below left= of top] (v1);
\coordinate [below right= of top] (v2);
\coordinate [below left= of v2] (bottom);
\draw (top) -- (v1);
\draw (top) -- (v2);
\draw (v1) -- (bottom);
\draw (v2) -- (bottom);
\filldraw[fill=white, draw=black,circle] (top) node[fill,circle,draw,inner sep=1pt]{$2$};
\filldraw[fill=white, draw=black,circle] (v1) node[fill,circle,draw,inner sep=1pt]{$1$};
\filldraw[fill=white, draw=black,circle] (v2) node[fill,circle,draw,inner sep=1pt]{$1$};
\filldraw[fill=white, draw=black,circle] (bottom) node[fill,circle,draw,inner sep=1pt]{$0$};
\end{scope}
\end{tikzpicture}
}.
\end{align}
This can be compared to the coproduct calculation, 
\begin{align}
\widetilde \Delta_4
{ 
\def \scale {2ex}
\begin{tikzpicture}[x=\scale,y=\scale,baseline={([yshift=-.5ex]current bounding box.center)}]
\begin{scope}[node distance=1]
\coordinate (v0);
\coordinate[right=.5 of v0] (v4);
\coordinate[above right= of v4] (v2);
\coordinate[below right= of v4] (v3);
\coordinate[below right= of v2] (v5);
\coordinate[right=.5 of v5] (v1);
\coordinate[above right= of v2] (o1);
\coordinate[below right= of v2] (o2);
\coordinate[below left=.5 of v0] (i1);
\coordinate[above left=.5 of v0] (i2);
\coordinate[below right=.5 of v1] (o1);
\coordinate[above right=.5 of v1] (o2);
\draw (v0) -- (i1);
\draw (v0) -- (i2);
\draw (v1) -- (o1);
\draw (v1) -- (o2);
\draw (v0) to[bend left=20] (v2);
\draw (v0) to[bend right=20] (v3);
\draw (v1) to[bend left=20] (v3);
\draw (v1) to[bend right=20] (v2);
\draw (v2) to[bend right=60] (v3);
\draw (v2) to[bend left=60] (v3);
\filldraw (v0) circle(1pt);
\filldraw (v1) circle(1pt);
\filldraw (v2) circle(1pt);
\filldraw (v3) circle(1pt);
\ifdefined\cvl
\draw[line width=1.5pt] (v0) to[bend left=20] (v2);
\draw[line width=1.5pt] (v0) to[bend right=20] (v3);
\fi
\ifdefined\cvr
\draw[line width=1.5pt] (v1) to[bend left=20] (v3);
\draw[line width=1.5pt] (v1) to[bend right=20] (v2);
\fi
\ifdefined\cvml
\draw[line width=1.5pt] (v2) to[bend left=60] (v3);
\fi
\ifdefined\cvmr
\draw[line width=1.5pt] (v2) to[bend right=60] (v3);
\fi
\end{scope}
\end{tikzpicture}
}
&= 
2 
{
\def \scale {2ex}
\begin{tikzpicture}[x=\scale,y=\scale,baseline={([yshift=-.5ex]current bounding box.center)}]
\begin{scope}[node distance=1]
\coordinate (v0);
\coordinate[right=.5 of v0] (v4);
\coordinate[above right= of v4] (v2);
\coordinate[below right= of v4] (v3);
\coordinate[above right=.5 of v2] (o1);
\coordinate[below right=.5 of v3] (o2);
\coordinate[below left=.5 of v0] (i1);
\coordinate[above left=.5 of v0] (i2);
\draw (v0) -- (i1);
\draw (v0) -- (i2);
\draw (v2) -- (o1);
\draw (v3) -- (o2);
\draw (v0) to[bend left=20] (v2);
\draw (v0) to[bend right=20] (v3);
\draw (v2) to[bend right=60] (v3);
\draw (v2) to[bend left=60] (v3);
\filldraw (v0) circle(1pt);
\filldraw (v2) circle(1pt);
\filldraw (v3) circle(1pt);
\end{scope}
\end{tikzpicture}
}
\otimes
{
\def \scale {2ex}
\begin{tikzpicture}[x=\scale,y=\scale,baseline={([yshift=-.5ex]current bounding box.center)}]
\begin{scope}[node distance=1]
    \coordinate (v0);
    \coordinate [right=.5 of v0] (vm);
    \coordinate [right=.5 of vm] (v1);
    \coordinate [above left=.5 of v0] (i0); 
    \coordinate [below left=.5 of v0] (i1); 
    \coordinate [above right=.5 of v1] (o0); 
    \coordinate [below right=.5 of v1] (o1); 
    \draw (vm) circle(.5);
    \draw (i0) -- (v0);
    \draw (i1) -- (v0);
    \draw (o0) -- (v1);
    \draw (o1) -- (v1);
    \filldraw (v0) circle(1pt);
    \filldraw (v1) circle(1pt);
\end{scope}
\end{tikzpicture}
}
+
{
\def \scale {2ex}
\begin{tikzpicture}[x=\scale,y=\scale,baseline={([yshift=-.5ex]current bounding box.center)}]
\begin{scope}[node distance=1]
    \coordinate (v0);
    \coordinate [below=1 of v0] (v1);
    \coordinate [above left=.5 of v0] (i0); 
    \coordinate [above right=.5 of v0] (i1); 
    \coordinate [below left=.5 of v1] (o0); 
    \coordinate [below right=.5 of v1] (o1); 
    \coordinate [above=.5 of v1] (vm);
    \draw (vm) circle(.5);
    \draw (i0) -- (v0);
    \draw (i1) -- (v0);
    \draw (o0) -- (v1);
    \draw (o1) -- (v1);
    \filldraw (v0) circle(1pt);
    \filldraw (v1) circle(1pt);
\end{scope}
\end{tikzpicture}
}
\otimes
{
\def \scale {2ex}
\begin{tikzpicture}[x=\scale,y=\scale,baseline={([yshift=-.5ex]current bounding box.center)}]
\begin{scope}[node distance=1]
    \coordinate (v0);
    \coordinate [right=.5 of v0] (vm1);
    \coordinate [right=.5 of vm1] (v1);
    \coordinate [right=.5 of v1] (vm2);
    \coordinate [right=.5 of vm2] (v2);
    \coordinate [above left=.5 of v0] (i0); 
    \coordinate [below left=.5 of v0] (i1); 
    \coordinate [above right=.5 of v2] (o0); 
    \coordinate [below right=.5 of v2] (o1); 
    \draw (vm1) circle(.5);
    \draw (vm2) circle(.5);
    \draw (i0) -- (v0);
    \draw (i1) -- (v0);
    \draw (o0) -- (v2);
    \draw (o1) -- (v2);
    \filldraw (v0) circle(1pt);
    \filldraw (v1) circle(1pt);
    \filldraw (v2) circle(1pt);
\end{scope}
\end{tikzpicture}
}
\end{align}
and the fact that $\chi_D$ is a Hopf algebra morphism is verified after computing the decorated poset of each subdiagram of 
$
{\def \scale {1.5ex}
\begin{tikzpicture}[x=\scale,y=\scale,baseline={([yshift=-.5ex]current bounding box.center)}]
\begin{scope}[node distance=1]
\coordinate (v0);
\coordinate[right=.5 of v0] (v4);
\coordinate[above right= of v4] (v2);
\coordinate[below right= of v4] (v3);
\coordinate[below right= of v2] (v5);
\coordinate[right=.5 of v5] (v1);
\coordinate[above right= of v2] (o1);
\coordinate[below right= of v2] (o2);
\coordinate[below left=.5 of v0] (i1);
\coordinate[above left=.5 of v0] (i2);
\coordinate[below right=.5 of v1] (o1);
\coordinate[above right=.5 of v1] (o2);
\draw (v0) -- (i1);
\draw (v0) -- (i2);
\draw (v1) -- (o1);
\draw (v1) -- (o2);
\draw (v0) to[bend left=20] (v2);
\draw (v0) to[bend right=20] (v3);
\draw (v1) to[bend left=20] (v3);
\draw (v1) to[bend right=20] (v2);
\draw (v2) to[bend right=60] (v3);
\draw (v2) to[bend left=60] (v3);
\filldraw (v0) circle(1pt);
\filldraw (v1) circle(1pt);
\filldraw (v2) circle(1pt);
\filldraw (v3) circle(1pt);
\ifdefined\cvl
\draw[line width=1.5pt] (v0) to[bend left=20] (v2);
\draw[line width=1.5pt] (v0) to[bend right=20] (v3);
\fi
\ifdefined\cvr
\draw[line width=1.5pt] (v1) to[bend left=20] (v3);
\draw[line width=1.5pt] (v1) to[bend right=20] (v2);
\fi
\ifdefined\cvml
\draw[line width=1.5pt] (v2) to[bend left=60] (v3);
\fi
\ifdefined\cvmr
\draw[line width=1.5pt] (v2) to[bend right=60] (v3);
\fi
\end{scope}
\end{tikzpicture}
}$ and comparing the previous two equations:
\begin{align}
&\chi_4 \left(
{\def \scale {4ex}
\begin{tikzpicture}[x=\scale,y=\scale,baseline={([yshift=-.5ex]current bounding box.center)}]
\begin{scope}[node distance=1]
\coordinate (v0);
\coordinate[right=.5 of v0] (v4);
\coordinate[above right= of v4] (v2);
\coordinate[below right= of v4] (v3);
\coordinate[above right=.5 of v2] (o1);
\coordinate[below right=.5 of v3] (o2);
\coordinate[below left=.5 of v0] (i1);
\coordinate[above left=.5 of v0] (i2);
\draw (v0) -- (i1);
\draw (v0) -- (i2);
\draw (v2) -- (o1);
\draw (v3) -- (o2);
\draw (v0) to[bend left=20] (v2);
\draw (v0) to[bend right=20] (v3);
\draw (v2) to[bend right=60] (v3);
\draw (v2) to[bend left=60] (v3);
\filldraw (v0) circle(1pt);
\filldraw (v2) circle(1pt);
\filldraw (v3) circle(1pt);
\end{scope}
\end{tikzpicture}
}
\right) = {\def \scale {4ex}
\begin{tikzpicture}[x=\scale,y=\scale,baseline={([yshift=-.5ex]current bounding box.center)}]
\begin{scope}[node distance=1]
\coordinate (top) ;
\coordinate [below= of top] (v1);
\coordinate [below= of v1] (bottom);
\draw (top) -- (v1);
\draw (v1) -- (bottom);
\filldraw[fill=white, draw=black,circle] (top) node[fill,circle,draw,inner sep=1pt]{$2$};
\filldraw[fill=white, draw=black,circle] (v1) node[fill,circle,draw,inner sep=1pt]{$1$};
\filldraw[fill=white, draw=black,circle] (bottom) node[fill,circle,draw,inner sep=1pt]{$0$};
\end{scope}
\end{tikzpicture}
}& 
&\chi_4 \left(
{
\def \scale {4ex}
\begin{tikzpicture}[x=\scale,y=\scale,baseline={([yshift=-.5ex]current bounding box.center)}]
\begin{scope}[node distance=1]
    \coordinate (v0);
    \coordinate [right=.5 of v0] (vm);
    \coordinate [right=.5 of vm] (v1);
    \coordinate [above left=.5 of v0] (i0); 
    \coordinate [below left=.5 of v0] (i1); 
    \coordinate [above right=.5 of v1] (o0); 
    \coordinate [below right=.5 of v1] (o1); 
    \draw (vm) circle(.5);
    \draw (i0) -- (v0);
    \draw (i1) -- (v0);
    \draw (o0) -- (v1);
    \draw (o1) -- (v1);
    \filldraw (v0) circle(1pt);
    \filldraw (v1) circle(1pt);
\end{scope}
\end{tikzpicture}
} \right)
=
{
\def \scale {4ex}
\begin{tikzpicture}[x=\scale,y=\scale,baseline={([yshift=-.5ex]current bounding box.center)}]
\begin{scope}[node distance=1]
\coordinate (top) ;
\coordinate [below= of top] (bottom);
\draw (top) -- (bottom);
\filldraw[fill=white, draw=black,circle] (top) node[fill,circle,draw,inner sep=1pt]{$1$};
\filldraw[fill=white, draw=black,circle] (bottom) node[fill,circle,draw,inner sep=1pt]{$0$};
\end{scope}
\end{tikzpicture}
}
&
&\chi_4 \left(
{\def \scale {4ex}
\begin{tikzpicture}[x=\scale,y=\scale,baseline={([yshift=-.5ex]current bounding box.center)}]
\begin{scope}[node distance=1]
    \coordinate (v0);
    \coordinate [right=.5 of v0] (vm1);
    \coordinate [right=.5 of vm1] (v1);
    \coordinate [right=.5 of v1] (vm2);
    \coordinate [right=.5 of vm2] (v2);
    \coordinate [above left=.5 of v0] (i0); 
    \coordinate [below left=.5 of v0] (i1); 
    \coordinate [above right=.5 of v2] (o0); 
    \coordinate [below right=.5 of v2] (o1); 
    \draw (vm1) circle(.5);
    \draw (vm2) circle(.5);
    \draw (i0) -- (v0);
    \draw (i1) -- (v0);
    \draw (o0) -- (v2);
    \draw (o1) -- (v2);
    \filldraw (v0) circle(1pt);
    \filldraw (v1) circle(1pt);
    \filldraw (v2) circle(1pt);
\end{scope}
\end{tikzpicture}
}
\right) = {\def \scale {4ex}
\begin{tikzpicture}[x=\scale,y=\scale,baseline={([yshift=-.5ex]current bounding box.center)}]
\begin{scope}[node distance=1]
\coordinate (top) ;
\coordinate [below left= of top] (v1);
\coordinate [below right= of top] (v2);
\coordinate [below left= of v2] (bottom);
\draw (top) -- (v1);
\draw (top) -- (v2);
\draw (v1) -- (bottom);
\draw (v2) -- (bottom);
\filldraw[fill=white, draw=black,circle] (top) node[fill,circle,draw,inner sep=1pt]{$2$};
\filldraw[fill=white, draw=black,circle] (v1) node[fill,circle,draw,inner sep=1pt]{$1$};
\filldraw[fill=white, draw=black,circle] (v2) node[fill,circle,draw,inner sep=1pt]{$1$};
\filldraw[fill=white, draw=black,circle] (bottom) node[fill,circle,draw,inner sep=1pt]{$0$};
\end{scope}
\end{tikzpicture}
}.
\end{align}
\else

MISSING IN DRAFT MODE

\fi
\end{expl}

\section{An application of the Hopf algebra of decorated lattices}
Some calculations are easily performed in the Hopf algebra of decorated lattices, but hard do on the Feynman diagram counterpart. One example is the evaluation of the counterterm in zero-dimensional QFTs, where the Feynman rules map every diagram to a constant. The counterterm map in zero-dimensional field theory takes the form 
\begin{align}
\label{eqn:antipode}
 S^R_D := \phi \circ S_D,
\end{align}
where $\phi$ are the Feynman rules, which map $\Gamma$ to $\hbar^{h_1(\Gamma)}$ and $S_D$ is the antipode on $\hopffg_D$.

Using the fact that $\chi_D$ is a Hopf algebra morphism it can be shown that
\begin{prop}{\cite[Corr. 5]{borinsky2015}}
\begin{align}
\label{eqn:phimoebius}
{S^R_D}(\Gamma) = \hbar^{h_1(\Gamma)} \mu_{\sdsubdiags_D(\Gamma)}( \hat{0}, \hat{1} )
\end{align}
on the Hopf algebra of Feynman diagrams with $\hat{0}=\emptyset$ and $\hat{1}=\Gamma$, the lower and upper bound of $\sdsubdiags_D(\Gamma)$, where $S^R_D$ is the counterterm map in zero-dimensional field theory and $\mu_L$ the Moebius function of the lattice $L$. 
The Moebius function is defined as,
\begin{align}
\label{eqn:moebius}
\mu_P(x,y) &= 
\begin{cases}
  1,&\text{if } x=y  \\
- \sum \limits_{x \leq z < y} \mu_P(x,z)  & \text{if } x < y.
\end{cases}
\end{align}
for a poset $P$ and $x,y \in P$.
\end{prop}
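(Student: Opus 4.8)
The plan is to transport the antipode computation from $\hopffg_D$ into the incidence Hopf algebra $\hopfpos$ through the morphism $\chi_D$, where it becomes a purely order-theoretic statement about the Möbius function. First I would note that the Feynman rules factor through $\chi_D$. Define the linear map $\psi:\hopfpos\to\mathbb{Q}[\hbar]$ on generators by $\psi(P,\nu)=\hbar^{\nu(\hat1)}$. Since the top element of $P_1\times P_2$ is $(\hat1_1,\hat1_2)$ and $(\nu_1+\nu_2)(\hat1)=\nu_1(\hat1_1)+\nu_2(\hat1_2)$, the map $\psi$ respects $m_{\hopfpos}$ and is therefore a character. On a generator, $\chi_D(\Gamma)=(\sdsubdiags_D(\Gamma),\nu)$ with $\nu(\hat1)=\nu(\Gamma)=h_1(\Gamma)$, so $\psi(\chi_D(\Gamma))=\hbar^{h_1(\Gamma)}=\phi(\Gamma)$; as both $\phi$ and $\psi\circ\chi_D$ are algebra morphisms, they agree everywhere and $\phi=\psi\circ\chi_D$.

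Because $\chi_D$ is a Hopf algebra morphism (Theorem~\ref{thm:hopf_alg_morph}), antipodes intertwine, $\chi_D\circ S_D=S_{\hopfpos}\circ\chi_D$, so
\begin{align}
S^R_D(\Gamma)=\phi\bigl(S_D(\Gamma)\bigr)=\psi\bigl(\chi_D(S_D(\Gamma))\bigr)=\psi\bigl(S_{\hopfpos}(\sdsubdiags_D(\Gamma),\nu)\bigr).
\end{align}
This reduces the Proposition to the following identity in $\hopfpos$, to be proved for every bounded decorated poset $(P,\nu)$:
\begin{align}
\psi\bigl(S_{\hopfpos}(P,\nu)\bigr)=\hbar^{\nu(\hat1)}\,\mu_P(\hat0,\hat1).
\end{align}
The claim then follows by specialising to $P=\sdsubdiags_D(\Gamma)$, for which $\nu(\hat1)=h_1(\Gamma)$.

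I would prove this identity by induction on the length of the longest chain of $P$. The algebra $\hopfpos$ is connected and filtered by this length, so the recursive antipode formula $S_{\hopfpos}(x)=-x-\sum S_{\hopfpos}(x')x''$, with $\widetilde{\Delta}x=\sum x'\otimes x''$, is well-founded; the singleton poset $\mathbb{I}$ (where $\hat0=\hat1$, $\nu(\hat1)=0$ and $\mu=1$) gives the base case $\psi(S_{\hopfpos}(\mathbb{I}))=1$. Reading off the reduced coproduct from \eqref{eqn:poset_cop}, the terms $x=\hat0$ and $x=\hat1$ contribute exactly $\mathbb{I}\otimes(P,\nu)$ and $(P,\nu)\otimes\mathbb{I}$ — using $\nu(\hat0)=0$ for the first and the shift $\nu-\nu(\hat1)$, which trivialises the singleton $[\hat1,\hat1]$, for the second — so that
\begin{align}
\widetilde{\Delta}_{\hopfpos}(P,\nu)=\sum_{\hat0<x<\hat1}([\hat0,x],\nu)\otimes([x,\hat1],\nu-\nu(x)).
\end{align}
Applying the character $\psi$, inserting $\psi([x,\hat1],\nu-\nu(x))=\hbar^{\nu(\hat1)-\nu(x)}$ and the induction hypothesis $\psi(S_{\hopfpos}([\hat0,x],\nu))=\hbar^{\nu(x)}\mu_P(\hat0,x)$ on the strictly shorter intervals, the powers of $\hbar$ recombine to a global $\hbar^{\nu(\hat1)}$ and leave $-\hbar^{\nu(\hat1)}\bigl(1+\sum_{\hat0<x<\hat1}\mu_P(\hat0,x)\bigr)$, which equals $\hbar^{\nu(\hat1)}\mu_P(\hat0,\hat1)$ by the defining recursion \eqref{eqn:moebius}.

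The work is almost entirely bookkeeping, and the one place demanding care is the extraction of the reduced coproduct: one must check that the two boundary terms of \eqref{eqn:poset_cop} are precisely the unit-decorated contributions $\mathbb{I}\otimes(\cdot)$ and $(\cdot)\otimes\mathbb{I}$, which is exactly where the normalisation $\nu(\hat0)=0$ and the decoration shift by $\nu(x)$ are indispensable. I would also record that the Möbius function of the interval $[\hat0,x]$ coincides with $\mu_P(\hat0,x)$, since $\mu$ depends only on the interval, and that the chain-length filtration makes $\hopfpos$ connected so the recursion terminates.
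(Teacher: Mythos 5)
Your proof is correct and follows essentially the route the paper indicates: it transports the antipode through the morphism $\chi_D$ of Theorem~\ref{thm:hopf_alg_morph} and then invokes the classical identity between the antipode of an incidence Hopf algebra and the M\"obius function, which is exactly the content of the cited Corollary~5 of \cite{borinsky2015}. The inductive verification of $\psi\bigl(S_{\hopfpos}(P,\nu)\bigr)=\hbar^{\nu(\hat 1)}\mu_P(\hat 0,\hat 1)$, including the careful extraction of the boundary terms of the reduced coproduct, is sound and fills in the details the paper leaves to the reference.
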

The calculation of the Moebius function is in general much easier than the calculation of the antipode in formula \eqref{eqn:antipode}. This statement can also be used to deduce generating functions for the weighted number of primitive diagrams in QFTs as was done for $\phi^4$ and Yang-Mills in terms of the counter terms in \cite{borinsky2015}. In a future publication, these ideas will be used to enumerate the weighted number of primitive diagrams for these theories explicitly \cite{borinskyunpub}. 

\section{Properties of the lattices}
Having established a connection between the Hopf algebra of Feynman diagrams and the lattices, we can ask what the lattices tell us about the coradical degree of the diagrams. It is easily seen from the definition of the coproducts in $\hopflat$ and $\hopffg$ that the length of the longest `chain', a path from top of the Hasse diagram to the bottom, is the coradical degree of the Feynman diagram. If all complete chains have the same length, this number is called the rank of the poset or lattice and the poset or lattice is called ranked or graded. 

A chain of the poset $\sdsubdiags_D(\Gamma)$ corresponds to a \textit{forest} of the diagram in the scope of the BPHZ algorithm. The statement that the poset $\sdsubdiags_D(\Gamma)$ is graded implies that all complete forests of the diagram have the same cardinality. Furthermore, it means that the coradical filtration is in fact a graduation of the Hopf algebra of Feynman diagrams \cite{borinsky2015}. 

{
\ifdefined\nodraft
\begin{figure}
  \subcaptionbox{Example of a diagram where $\sdsubdiags_4(\Gamma)$ forms a non-graded lattice.\label{fig:diagramnonranked}}
  [.45\linewidth]{
      $\Gamma = {
\def \scale{2em}
\begin{tikzpicture}[x=\scale,y=\scale,baseline={([yshift=-.5ex]current bounding box.center)}]
\begin{scope}[node distance=1]
\coordinate (i1);
\coordinate[right=.5 of i1] (v0);
\coordinate[right=.5 of v0] (v2);
\coordinate[right=.5 of v2] (vm);
\coordinate[above=of vm] (v3);
\coordinate[below=of vm] (v4);
\coordinate[right=.5 of vm] (v2d);
\coordinate[right=.5 of v2d] (v1);
\coordinate[right=.5 of v1] (o1);
\ifdefined \cvlt
\draw[line width=1.5pt] (v0) to[bend left=45] (v3);
\else
\draw (v0) to[bend left=45] (v3);
\fi
\ifdefined \cvlb
\draw[line width=1.5pt] (v0) to[bend right=45] (v4);
\else
\draw (v0) to[bend right=45] (v4);
\fi
\ifdefined \cvlm
\draw[line width=1.5pt] (v0) -- (v2);
\else
\draw (v0) -- (v2);
\fi
\ifdefined \cvrt
\draw[line width=1.5pt] (v1) to[bend right=45] (v3);
\else
\draw (v1) to[bend right=45] (v3);
\fi
\ifdefined \cvrb
\draw[line width=1.5pt] (v1) to[bend left=45] (v4);
\else
\draw (v1) to[bend left=45] (v4);
\fi
\ifdefined \cvmt
\draw[line width=1.5pt] (v3) to[bend right=20] (v2);
\else
\draw (v3) to[bend right=20] (v2);
\fi
\ifdefined \cvmb
\draw[line width=1.5pt] (v4) to[bend left=20] (v2);
\else
\draw (v4) to[bend left=20] (v2);
\fi
\ifdefined \cvmm
\draw[line width=1.5pt] (v3) to[bend left=20] (v2d);
\draw[line width=1.5pt] (v4) to[bend right=20] (v2d);
\else
\draw (v3) to[bend left=20] (v2d);
\draw (v4) to[bend right=20] (v2d);
\fi
\filldraw[color=white] (v2d) circle(.2);
\ifdefined \cvrm
\draw[line width=1.5pt] (v1) -- (v2);
\else
\draw (v1) -- (v2);
\fi
\draw (v0) -- (i1);
\draw (v1) -- (o1);
\filldraw (v0) circle(1pt);
\filldraw (v1) circle(1pt);
\filldraw (v2) circle(1pt);
\filldraw (v3) circle(1pt);
\filldraw (v4) circle(1pt);
\end{scope}
\end{tikzpicture}
      }$
  }
  \subcaptionbox{The Hasse diagram of the corresponding non-graded lattice, where the decoration was omitted.\label{fig:diagramnonrankedlattice}}
  [.45\linewidth]{
  $\chi_4(\Gamma) = 
{
\def \scale{1em}
\begin{tikzpicture}[x=\scale,y=\scale,baseline={([yshift=-.5ex]current bounding box.center)}]
\begin{scope}[node distance=1]
\coordinate (top) ;
\coordinate [below left=2 and   .5 of top] (v1) ;
\coordinate [below left=2 and  1   of top] (v2) ;
\coordinate [below left=2 and  1.5 of top] (v3) ;
\coordinate [below right=1.5  and  1   of top] (w1) ;
\coordinate [below right=1.5  and  1.5 of top] (w2) ;
\coordinate [below right=1.5  and  2   of top] (w3) ;
\coordinate [below right=1.5  and  3   of top] (w4) ;
\coordinate [below right=1.5  and  3.5 of top] (w5) ;
\coordinate [below right=1.5  and  4   of top] (w6) ;
\coordinate [below right=2.5  and  1.5 of top] (u1) ;
\coordinate [below right=2.5  and  3.5 of top] (u2) ;
\coordinate [below=4 of top] (bottom) ;
\draw (top) -- (v1);
\draw (top) -- (v2);
\draw (top) -- (v3);
\draw (top) to[out=-30,in=90] (w1);
\draw (top) to[out=-25,in=90] (w2);
\draw (top) to[out=-20,in=90] (w3);
\draw (top) to[out=-15,in=90] (w4);
\draw (top) to[out=-10,in=90] (w5);
\draw (top) to[out=-5,in=90] (w6);
\draw (w1) -- (u1);
\draw (w2) -- (u1);
\draw (w3) -- (u1);
\draw (w4) -- (u2);
\draw (w5) -- (u2);
\draw (w6) -- (u2);
\draw (v1) -- (bottom);
\draw (v2) -- (bottom);
\draw (v3) -- (bottom);
\draw (u1) -- (bottom);
\draw (u2) -- (bottom);
\filldraw[fill=white, draw=black] (top) circle(2pt);
\filldraw[fill=white, draw=black] (v1) circle(2pt);
\filldraw[fill=white, draw=black] (v2) circle(2pt);
\filldraw[fill=white, draw=black] (v3) circle(2pt);
\filldraw[fill=white, draw=black] (w1) circle(2pt);
\filldraw[fill=white, draw=black] (w2) circle(2pt);
\filldraw[fill=white, draw=black] (w3) circle(2pt);
\filldraw[fill=white, draw=black] (w4) circle(2pt);
\filldraw[fill=white, draw=black] (w5) circle(2pt);
\filldraw[fill=white, draw=black] (w6) circle(2pt);
\filldraw[fill=white, draw=black] (u1) circle(2pt);
\filldraw[fill=white, draw=black] (u2) circle(2pt);
\filldraw[fill=white, draw=black] (bottom) circle(2pt);
\end{scope}
\end{tikzpicture}
}$
  }
  \subcaptionbox{The non-trivial superficially divergent subdiagrams and the complete forests which can be formed out of them.\label{fig:diagramnonrankeddetails}}
  [\linewidth]{
\def \scale{1.5em}
$
\begin{aligned}
\alpha_1 &= 
{
\def\cvlm {}
\def\cvlb {}
\def\cvmt {}
\def\cvmm {}
\def\cvmb {}
\def\cvrt {}
\def\cvrm {}
\def\cvrb {}
\begin{tikzpicture}[x=\scale,y=\scale,baseline={([yshift=-.5ex]current bounding box.center)}]
\begin{scope}[node distance=1]
\coordinate (i1);
\coordinate[right=.5 of i1] (v0);
\coordinate[right=.5 of v0] (v2);
\coordinate[right=.5 of v2] (vm);
\coordinate[above=of vm] (v3);
\coordinate[below=of vm] (v4);
\coordinate[right=.5 of vm] (v2d);
\coordinate[right=.5 of v2d] (v1);
\coordinate[right=.5 of v1] (o1);
\ifdefined \cvlt
\draw[line width=1.5pt] (v0) to[bend left=45] (v3);
\else
\draw (v0) to[bend left=45] (v3);
\fi
\ifdefined \cvlb
\draw[line width=1.5pt] (v0) to[bend right=45] (v4);
\else
\draw (v0) to[bend right=45] (v4);
\fi
\ifdefined \cvlm
\draw[line width=1.5pt] (v0) -- (v2);
\else
\draw (v0) -- (v2);
\fi
\ifdefined \cvrt
\draw[line width=1.5pt] (v1) to[bend right=45] (v3);
\else
\draw (v1) to[bend right=45] (v3);
\fi
\ifdefined \cvrb
\draw[line width=1.5pt] (v1) to[bend left=45] (v4);
\else
\draw (v1) to[bend left=45] (v4);
\fi
\ifdefined \cvmt
\draw[line width=1.5pt] (v3) to[bend right=20] (v2);
\else
\draw (v3) to[bend right=20] (v2);
\fi
\ifdefined \cvmb
\draw[line width=1.5pt] (v4) to[bend left=20] (v2);
\else
\draw (v4) to[bend left=20] (v2);
\fi
\ifdefined \cvmm
\draw[line width=1.5pt] (v3) to[bend left=20] (v2d);
\draw[line width=1.5pt] (v4) to[bend right=20] (v2d);
\else
\draw (v3) to[bend left=20] (v2d);
\draw (v4) to[bend right=20] (v2d);
\fi
\filldraw[color=white] (v2d) circle(.2);
\ifdefined \cvrm
\draw[line width=1.5pt] (v1) -- (v2);
\else
\draw (v1) -- (v2);
\fi
\draw (v0) -- (i1);
\draw (v1) -- (o1);
\filldraw (v0) circle(1pt);
\filldraw (v1) circle(1pt);
\filldraw (v2) circle(1pt);
\filldraw (v3) circle(1pt);
\filldraw (v4) circle(1pt);
\end{scope}
\end{tikzpicture}
}
, &
\alpha_2 &= 
{
\def\cvlt {}
\def\cvlb {}
\def\cvmt {}
\def\cvmm {}
\def\cvmb {}
\def\cvrt {}
\def\cvrm {}
\def\cvrb {}
\begin{tikzpicture}[x=\scale,y=\scale,baseline={([yshift=-.5ex]current bounding box.center)}]
\begin{scope}[node distance=1]
\coordinate (i1);
\coordinate[right=.5 of i1] (v0);
\coordinate[right=.5 of v0] (v2);
\coordinate[right=.5 of v2] (vm);
\coordinate[above=of vm] (v3);
\coordinate[below=of vm] (v4);
\coordinate[right=.5 of vm] (v2d);
\coordinate[right=.5 of v2d] (v1);
\coordinate[right=.5 of v1] (o1);
\ifdefined \cvlt
\draw[line width=1.5pt] (v0) to[bend left=45] (v3);
\else
\draw (v0) to[bend left=45] (v3);
\fi
\ifdefined \cvlb
\draw[line width=1.5pt] (v0) to[bend right=45] (v4);
\else
\draw (v0) to[bend right=45] (v4);
\fi
\ifdefined \cvlm
\draw[line width=1.5pt] (v0) -- (v2);
\else
\draw (v0) -- (v2);
\fi
\ifdefined \cvrt
\draw[line width=1.5pt] (v1) to[bend right=45] (v3);
\else
\draw (v1) to[bend right=45] (v3);
\fi
\ifdefined \cvrb
\draw[line width=1.5pt] (v1) to[bend left=45] (v4);
\else
\draw (v1) to[bend left=45] (v4);
\fi
\ifdefined \cvmt
\draw[line width=1.5pt] (v3) to[bend right=20] (v2);
\else
\draw (v3) to[bend right=20] (v2);
\fi
\ifdefined \cvmb
\draw[line width=1.5pt] (v4) to[bend left=20] (v2);
\else
\draw (v4) to[bend left=20] (v2);
\fi
\ifdefined \cvmm
\draw[line width=1.5pt] (v3) to[bend left=20] (v2d);
\draw[line width=1.5pt] (v4) to[bend right=20] (v2d);
\else
\draw (v3) to[bend left=20] (v2d);
\draw (v4) to[bend right=20] (v2d);
\fi
\filldraw[color=white] (v2d) circle(.2);
\ifdefined \cvrm
\draw[line width=1.5pt] (v1) -- (v2);
\else
\draw (v1) -- (v2);
\fi
\draw (v0) -- (i1);
\draw (v1) -- (o1);
\filldraw (v0) circle(1pt);
\filldraw (v1) circle(1pt);
\filldraw (v2) circle(1pt);
\filldraw (v3) circle(1pt);
\filldraw (v4) circle(1pt);
\end{scope}
\end{tikzpicture}
}
, &
\alpha_3 &= 
{
\def\cvlm {}
\def\cvlt {}
\def\cvmt {}
\def\cvmm {}
\def\cvmb {}
\def\cvrt {}
\def\cvrm {}
\def\cvrb {}
\begin{tikzpicture}[x=\scale,y=\scale,baseline={([yshift=-.5ex]current bounding box.center)}]
\begin{scope}[node distance=1]
\coordinate (i1);
\coordinate[right=.5 of i1] (v0);
\coordinate[right=.5 of v0] (v2);
\coordinate[right=.5 of v2] (vm);
\coordinate[above=of vm] (v3);
\coordinate[below=of vm] (v4);
\coordinate[right=.5 of vm] (v2d);
\coordinate[right=.5 of v2d] (v1);
\coordinate[right=.5 of v1] (o1);
\ifdefined \cvlt
\draw[line width=1.5pt] (v0) to[bend left=45] (v3);
\else
\draw (v0) to[bend left=45] (v3);
\fi
\ifdefined \cvlb
\draw[line width=1.5pt] (v0) to[bend right=45] (v4);
\else
\draw (v0) to[bend right=45] (v4);
\fi
\ifdefined \cvlm
\draw[line width=1.5pt] (v0) -- (v2);
\else
\draw (v0) -- (v2);
\fi
\ifdefined \cvrt
\draw[line width=1.5pt] (v1) to[bend right=45] (v3);
\else
\draw (v1) to[bend right=45] (v3);
\fi
\ifdefined \cvrb
\draw[line width=1.5pt] (v1) to[bend left=45] (v4);
\else
\draw (v1) to[bend left=45] (v4);
\fi
\ifdefined \cvmt
\draw[line width=1.5pt] (v3) to[bend right=20] (v2);
\else
\draw (v3) to[bend right=20] (v2);
\fi
\ifdefined \cvmb
\draw[line width=1.5pt] (v4) to[bend left=20] (v2);
\else
\draw (v4) to[bend left=20] (v2);
\fi
\ifdefined \cvmm
\draw[line width=1.5pt] (v3) to[bend left=20] (v2d);
\draw[line width=1.5pt] (v4) to[bend right=20] (v2d);
\else
\draw (v3) to[bend left=20] (v2d);
\draw (v4) to[bend right=20] (v2d);
\fi
\filldraw[color=white] (v2d) circle(.2);
\ifdefined \cvrm
\draw[line width=1.5pt] (v1) -- (v2);
\else
\draw (v1) -- (v2);
\fi
\draw (v0) -- (i1);
\draw (v1) -- (o1);
\filldraw (v0) circle(1pt);
\filldraw (v1) circle(1pt);
\filldraw (v2) circle(1pt);
\filldraw (v3) circle(1pt);
\filldraw (v4) circle(1pt);
\end{scope}
\end{tikzpicture}
}
\\
\beta_1 &= 
{
\def\cvrm {}
\def\cvrb {}
\def\cvmt {}
\def\cvmm {}
\def\cvmb {}
\def\cvlt {}
\def\cvlm {}
\def\cvlb {}
\begin{tikzpicture}[x=\scale,y=\scale,baseline={([yshift=-.5ex]current bounding box.center)}]
\begin{scope}[node distance=1]
\coordinate (i1);
\coordinate[right=.5 of i1] (v0);
\coordinate[right=.5 of v0] (v2);
\coordinate[right=.5 of v2] (vm);
\coordinate[above=of vm] (v3);
\coordinate[below=of vm] (v4);
\coordinate[right=.5 of vm] (v2d);
\coordinate[right=.5 of v2d] (v1);
\coordinate[right=.5 of v1] (o1);
\ifdefined \cvlt
\draw[line width=1.5pt] (v0) to[bend left=45] (v3);
\else
\draw (v0) to[bend left=45] (v3);
\fi
\ifdefined \cvlb
\draw[line width=1.5pt] (v0) to[bend right=45] (v4);
\else
\draw (v0) to[bend right=45] (v4);
\fi
\ifdefined \cvlm
\draw[line width=1.5pt] (v0) -- (v2);
\else
\draw (v0) -- (v2);
\fi
\ifdefined \cvrt
\draw[line width=1.5pt] (v1) to[bend right=45] (v3);
\else
\draw (v1) to[bend right=45] (v3);
\fi
\ifdefined \cvrb
\draw[line width=1.5pt] (v1) to[bend left=45] (v4);
\else
\draw (v1) to[bend left=45] (v4);
\fi
\ifdefined \cvmt
\draw[line width=1.5pt] (v3) to[bend right=20] (v2);
\else
\draw (v3) to[bend right=20] (v2);
\fi
\ifdefined \cvmb
\draw[line width=1.5pt] (v4) to[bend left=20] (v2);
\else
\draw (v4) to[bend left=20] (v2);
\fi
\ifdefined \cvmm
\draw[line width=1.5pt] (v3) to[bend left=20] (v2d);
\draw[line width=1.5pt] (v4) to[bend right=20] (v2d);
\else
\draw (v3) to[bend left=20] (v2d);
\draw (v4) to[bend right=20] (v2d);
\fi
\filldraw[color=white] (v2d) circle(.2);
\ifdefined \cvrm
\draw[line width=1.5pt] (v1) -- (v2);
\else
\draw (v1) -- (v2);
\fi
\draw (v0) -- (i1);
\draw (v1) -- (o1);
\filldraw (v0) circle(1pt);
\filldraw (v1) circle(1pt);
\filldraw (v2) circle(1pt);
\filldraw (v3) circle(1pt);
\filldraw (v4) circle(1pt);
\end{scope}
\end{tikzpicture}
}
, &
\beta_2 &= 
{
\def\cvrt {}
\def\cvrb {}
\def\cvmt {}
\def\cvmm {}
\def\cvmb {}
\def\cvlt {}
\def\cvlm {}
\def\cvlb {}
\begin{tikzpicture}[x=\scale,y=\scale,baseline={([yshift=-.5ex]current bounding box.center)}]
\begin{scope}[node distance=1]
\coordinate (i1);
\coordinate[right=.5 of i1] (v0);
\coordinate[right=.5 of v0] (v2);
\coordinate[right=.5 of v2] (vm);
\coordinate[above=of vm] (v3);
\coordinate[below=of vm] (v4);
\coordinate[right=.5 of vm] (v2d);
\coordinate[right=.5 of v2d] (v1);
\coordinate[right=.5 of v1] (o1);
\ifdefined \cvlt
\draw[line width=1.5pt] (v0) to[bend left=45] (v3);
\else
\draw (v0) to[bend left=45] (v3);
\fi
\ifdefined \cvlb
\draw[line width=1.5pt] (v0) to[bend right=45] (v4);
\else
\draw (v0) to[bend right=45] (v4);
\fi
\ifdefined \cvlm
\draw[line width=1.5pt] (v0) -- (v2);
\else
\draw (v0) -- (v2);
\fi
\ifdefined \cvrt
\draw[line width=1.5pt] (v1) to[bend right=45] (v3);
\else
\draw (v1) to[bend right=45] (v3);
\fi
\ifdefined \cvrb
\draw[line width=1.5pt] (v1) to[bend left=45] (v4);
\else
\draw (v1) to[bend left=45] (v4);
\fi
\ifdefined \cvmt
\draw[line width=1.5pt] (v3) to[bend right=20] (v2);
\else
\draw (v3) to[bend right=20] (v2);
\fi
\ifdefined \cvmb
\draw[line width=1.5pt] (v4) to[bend left=20] (v2);
\else
\draw (v4) to[bend left=20] (v2);
\fi
\ifdefined \cvmm
\draw[line width=1.5pt] (v3) to[bend left=20] (v2d);
\draw[line width=1.5pt] (v4) to[bend right=20] (v2d);
\else
\draw (v3) to[bend left=20] (v2d);
\draw (v4) to[bend right=20] (v2d);
\fi
\filldraw[color=white] (v2d) circle(.2);
\ifdefined \cvrm
\draw[line width=1.5pt] (v1) -- (v2);
\else
\draw (v1) -- (v2);
\fi
\draw (v0) -- (i1);
\draw (v1) -- (o1);
\filldraw (v0) circle(1pt);
\filldraw (v1) circle(1pt);
\filldraw (v2) circle(1pt);
\filldraw (v3) circle(1pt);
\filldraw (v4) circle(1pt);
\end{scope}
\end{tikzpicture}
}
, &
\beta_3 &= 
{
\def\cvrm {}
\def\cvrt {}
\def\cvmt {}
\def\cvmm {}
\def\cvmb {}
\def\cvlt {}
\def\cvlm {}
\def\cvlb {}
\begin{tikzpicture}[x=\scale,y=\scale,baseline={([yshift=-.5ex]current bounding box.center)}]
\begin{scope}[node distance=1]
\coordinate (i1);
\coordinate[right=.5 of i1] (v0);
\coordinate[right=.5 of v0] (v2);
\coordinate[right=.5 of v2] (vm);
\coordinate[above=of vm] (v3);
\coordinate[below=of vm] (v4);
\coordinate[right=.5 of vm] (v2d);
\coordinate[right=.5 of v2d] (v1);
\coordinate[right=.5 of v1] (o1);
\ifdefined \cvlt
\draw[line width=1.5pt] (v0) to[bend left=45] (v3);
\else
\draw (v0) to[bend left=45] (v3);
\fi
\ifdefined \cvlb
\draw[line width=1.5pt] (v0) to[bend right=45] (v4);
\else
\draw (v0) to[bend right=45] (v4);
\fi
\ifdefined \cvlm
\draw[line width=1.5pt] (v0) -- (v2);
\else
\draw (v0) -- (v2);
\fi
\ifdefined \cvrt
\draw[line width=1.5pt] (v1) to[bend right=45] (v3);
\else
\draw (v1) to[bend right=45] (v3);
\fi
\ifdefined \cvrb
\draw[line width=1.5pt] (v1) to[bend left=45] (v4);
\else
\draw (v1) to[bend left=45] (v4);
\fi
\ifdefined \cvmt
\draw[line width=1.5pt] (v3) to[bend right=20] (v2);
\else
\draw (v3) to[bend right=20] (v2);
\fi
\ifdefined \cvmb
\draw[line width=1.5pt] (v4) to[bend left=20] (v2);
\else
\draw (v4) to[bend left=20] (v2);
\fi
\ifdefined \cvmm
\draw[line width=1.5pt] (v3) to[bend left=20] (v2d);
\draw[line width=1.5pt] (v4) to[bend right=20] (v2d);
\else
\draw (v3) to[bend left=20] (v2d);
\draw (v4) to[bend right=20] (v2d);
\fi
\filldraw[color=white] (v2d) circle(.2);
\ifdefined \cvrm
\draw[line width=1.5pt] (v1) -- (v2);
\else
\draw (v1) -- (v2);
\fi
\draw (v0) -- (i1);
\draw (v1) -- (o1);
\filldraw (v0) circle(1pt);
\filldraw (v1) circle(1pt);
\filldraw (v2) circle(1pt);
\filldraw (v3) circle(1pt);
\filldraw (v4) circle(1pt);
\end{scope}
\end{tikzpicture}
}
\\
\gamma_1 &= 
{
\def\cvmm {}
\def\cvmb {}
\def\cvrt {}
\def\cvrm {}
\def\cvrb {}
\def\cvlt {}
\def\cvlm {}
\def\cvlb {}
\begin{tikzpicture}[x=\scale,y=\scale,baseline={([yshift=-.5ex]current bounding box.center)}]
\begin{scope}[node distance=1]
\coordinate (i1);
\coordinate[right=.5 of i1] (v0);
\coordinate[right=.5 of v0] (v2);
\coordinate[right=.5 of v2] (vm);
\coordinate[above=of vm] (v3);
\coordinate[below=of vm] (v4);
\coordinate[right=.5 of vm] (v2d);
\coordinate[right=.5 of v2d] (v1);
\coordinate[right=.5 of v1] (o1);
\ifdefined \cvlt
\draw[line width=1.5pt] (v0) to[bend left=45] (v3);
\else
\draw (v0) to[bend left=45] (v3);
\fi
\ifdefined \cvlb
\draw[line width=1.5pt] (v0) to[bend right=45] (v4);
\else
\draw (v0) to[bend right=45] (v4);
\fi
\ifdefined \cvlm
\draw[line width=1.5pt] (v0) -- (v2);
\else
\draw (v0) -- (v2);
\fi
\ifdefined \cvrt
\draw[line width=1.5pt] (v1) to[bend right=45] (v3);
\else
\draw (v1) to[bend right=45] (v3);
\fi
\ifdefined \cvrb
\draw[line width=1.5pt] (v1) to[bend left=45] (v4);
\else
\draw (v1) to[bend left=45] (v4);
\fi
\ifdefined \cvmt
\draw[line width=1.5pt] (v3) to[bend right=20] (v2);
\else
\draw (v3) to[bend right=20] (v2);
\fi
\ifdefined \cvmb
\draw[line width=1.5pt] (v4) to[bend left=20] (v2);
\else
\draw (v4) to[bend left=20] (v2);
\fi
\ifdefined \cvmm
\draw[line width=1.5pt] (v3) to[bend left=20] (v2d);
\draw[line width=1.5pt] (v4) to[bend right=20] (v2d);
\else
\draw (v3) to[bend left=20] (v2d);
\draw (v4) to[bend right=20] (v2d);
\fi
\filldraw[color=white] (v2d) circle(.2);
\ifdefined \cvrm
\draw[line width=1.5pt] (v1) -- (v2);
\else
\draw (v1) -- (v2);
\fi
\draw (v0) -- (i1);
\draw (v1) -- (o1);
\filldraw (v0) circle(1pt);
\filldraw (v1) circle(1pt);
\filldraw (v2) circle(1pt);
\filldraw (v3) circle(1pt);
\filldraw (v4) circle(1pt);
\end{scope}
\end{tikzpicture}
}
, &
\gamma_2 &= 
{
\def\cvmt {}
\def\cvmb {}
\def\cvrt {}
\def\cvrm {}
\def\cvrb {}
\def\cvlt {}
\def\cvlm {}
\def\cvlb {}
\begin{tikzpicture}[x=\scale,y=\scale,baseline={([yshift=-.5ex]current bounding box.center)}]
\begin{scope}[node distance=1]
\coordinate (i1);
\coordinate[right=.5 of i1] (v0);
\coordinate[right=.5 of v0] (v2);
\coordinate[right=.5 of v2] (vm);
\coordinate[above=of vm] (v3);
\coordinate[below=of vm] (v4);
\coordinate[right=.5 of vm] (v2d);
\coordinate[right=.5 of v2d] (v1);
\coordinate[right=.5 of v1] (o1);
\ifdefined \cvlt
\draw[line width=1.5pt] (v0) to[bend left=45] (v3);
\else
\draw (v0) to[bend left=45] (v3);
\fi
\ifdefined \cvlb
\draw[line width=1.5pt] (v0) to[bend right=45] (v4);
\else
\draw (v0) to[bend right=45] (v4);
\fi
\ifdefined \cvlm
\draw[line width=1.5pt] (v0) -- (v2);
\else
\draw (v0) -- (v2);
\fi
\ifdefined \cvrt
\draw[line width=1.5pt] (v1) to[bend right=45] (v3);
\else
\draw (v1) to[bend right=45] (v3);
\fi
\ifdefined \cvrb
\draw[line width=1.5pt] (v1) to[bend left=45] (v4);
\else
\draw (v1) to[bend left=45] (v4);
\fi
\ifdefined \cvmt
\draw[line width=1.5pt] (v3) to[bend right=20] (v2);
\else
\draw (v3) to[bend right=20] (v2);
\fi
\ifdefined \cvmb
\draw[line width=1.5pt] (v4) to[bend left=20] (v2);
\else
\draw (v4) to[bend left=20] (v2);
\fi
\ifdefined \cvmm
\draw[line width=1.5pt] (v3) to[bend left=20] (v2d);
\draw[line width=1.5pt] (v4) to[bend right=20] (v2d);
\else
\draw (v3) to[bend left=20] (v2d);
\draw (v4) to[bend right=20] (v2d);
\fi
\filldraw[color=white] (v2d) circle(.2);
\ifdefined \cvrm
\draw[line width=1.5pt] (v1) -- (v2);
\else
\draw (v1) -- (v2);
\fi
\draw (v0) -- (i1);
\draw (v1) -- (o1);
\filldraw (v0) circle(1pt);
\filldraw (v1) circle(1pt);
\filldraw (v2) circle(1pt);
\filldraw (v3) circle(1pt);
\filldraw (v4) circle(1pt);
\end{scope}
\end{tikzpicture}
}
, &
\gamma_3 &= 
{
\def\cvmm {}
\def\cvmt {}
\def\cvrt {}
\def\cvrm {}
\def\cvrb {}
\def\cvlt {}
\def\cvlm {}
\def\cvlb {}
\begin{tikzpicture}[x=\scale,y=\scale,baseline={([yshift=-.5ex]current bounding box.center)}]
\begin{scope}[node distance=1]
\coordinate (i1);
\coordinate[right=.5 of i1] (v0);
\coordinate[right=.5 of v0] (v2);
\coordinate[right=.5 of v2] (vm);
\coordinate[above=of vm] (v3);
\coordinate[below=of vm] (v4);
\coordinate[right=.5 of vm] (v2d);
\coordinate[right=.5 of v2d] (v1);
\coordinate[right=.5 of v1] (o1);
\ifdefined \cvlt
\draw[line width=1.5pt] (v0) to[bend left=45] (v3);
\else
\draw (v0) to[bend left=45] (v3);
\fi
\ifdefined \cvlb
\draw[line width=1.5pt] (v0) to[bend right=45] (v4);
\else
\draw (v0) to[bend right=45] (v4);
\fi
\ifdefined \cvlm
\draw[line width=1.5pt] (v0) -- (v2);
\else
\draw (v0) -- (v2);
\fi
\ifdefined \cvrt
\draw[line width=1.5pt] (v1) to[bend right=45] (v3);
\else
\draw (v1) to[bend right=45] (v3);
\fi
\ifdefined \cvrb
\draw[line width=1.5pt] (v1) to[bend left=45] (v4);
\else
\draw (v1) to[bend left=45] (v4);
\fi
\ifdefined \cvmt
\draw[line width=1.5pt] (v3) to[bend right=20] (v2);
\else
\draw (v3) to[bend right=20] (v2);
\fi
\ifdefined \cvmb
\draw[line width=1.5pt] (v4) to[bend left=20] (v2);
\else
\draw (v4) to[bend left=20] (v2);
\fi
\ifdefined \cvmm
\draw[line width=1.5pt] (v3) to[bend left=20] (v2d);
\draw[line width=1.5pt] (v4) to[bend right=20] (v2d);
\else
\draw (v3) to[bend left=20] (v2d);
\draw (v4) to[bend right=20] (v2d);
\fi
\filldraw[color=white] (v2d) circle(.2);
\ifdefined \cvrm
\draw[line width=1.5pt] (v1) -- (v2);
\else
\draw (v1) -- (v2);
\fi
\draw (v0) -- (i1);
\draw (v1) -- (o1);
\filldraw (v0) circle(1pt);
\filldraw (v1) circle(1pt);
\filldraw (v2) circle(1pt);
\filldraw (v3) circle(1pt);
\filldraw (v4) circle(1pt);
\end{scope}
\end{tikzpicture}
}
\\
\delta_1 &= 
{
\def\cvmt {}
\def\cvmm {}
\def\cvmb {}
\def\cvrt {}
\def\cvrm {}
\def\cvrb {}
\begin{tikzpicture}[x=\scale,y=\scale,baseline={([yshift=-.5ex]current bounding box.center)}]
\begin{scope}[node distance=1]
\coordinate (i1);
\coordinate[right=.5 of i1] (v0);
\coordinate[right=.5 of v0] (v2);
\coordinate[right=.5 of v2] (vm);
\coordinate[above=of vm] (v3);
\coordinate[below=of vm] (v4);
\coordinate[right=.5 of vm] (v2d);
\coordinate[right=.5 of v2d] (v1);
\coordinate[right=.5 of v1] (o1);
\ifdefined \cvlt
\draw[line width=1.5pt] (v0) to[bend left=45] (v3);
\else
\draw (v0) to[bend left=45] (v3);
\fi
\ifdefined \cvlb
\draw[line width=1.5pt] (v0) to[bend right=45] (v4);
\else
\draw (v0) to[bend right=45] (v4);
\fi
\ifdefined \cvlm
\draw[line width=1.5pt] (v0) -- (v2);
\else
\draw (v0) -- (v2);
\fi
\ifdefined \cvrt
\draw[line width=1.5pt] (v1) to[bend right=45] (v3);
\else
\draw (v1) to[bend right=45] (v3);
\fi
\ifdefined \cvrb
\draw[line width=1.5pt] (v1) to[bend left=45] (v4);
\else
\draw (v1) to[bend left=45] (v4);
\fi
\ifdefined \cvmt
\draw[line width=1.5pt] (v3) to[bend right=20] (v2);
\else
\draw (v3) to[bend right=20] (v2);
\fi
\ifdefined \cvmb
\draw[line width=1.5pt] (v4) to[bend left=20] (v2);
\else
\draw (v4) to[bend left=20] (v2);
\fi
\ifdefined \cvmm
\draw[line width=1.5pt] (v3) to[bend left=20] (v2d);
\draw[line width=1.5pt] (v4) to[bend right=20] (v2d);
\else
\draw (v3) to[bend left=20] (v2d);
\draw (v4) to[bend right=20] (v2d);
\fi
\filldraw[color=white] (v2d) circle(.2);
\ifdefined \cvrm
\draw[line width=1.5pt] (v1) -- (v2);
\else
\draw (v1) -- (v2);
\fi
\draw (v0) -- (i1);
\draw (v1) -- (o1);
\filldraw (v0) circle(1pt);
\filldraw (v1) circle(1pt);
\filldraw (v2) circle(1pt);
\filldraw (v3) circle(1pt);
\filldraw (v4) circle(1pt);
\end{scope}
\end{tikzpicture}
}
, &
\delta_2 &= 
{
\def\cvmt {}
\def\cvmm {}
\def\cvmb {}
\def\cvlt {}
\def\cvlm {}
\def\cvlb {}
\begin{tikzpicture}[x=\scale,y=\scale,baseline={([yshift=-.5ex]current bounding box.center)}]
\begin{scope}[node distance=1]
\coordinate (i1);
\coordinate[right=.5 of i1] (v0);
\coordinate[right=.5 of v0] (v2);
\coordinate[right=.5 of v2] (vm);
\coordinate[above=of vm] (v3);
\coordinate[below=of vm] (v4);
\coordinate[right=.5 of vm] (v2d);
\coordinate[right=.5 of v2d] (v1);
\coordinate[right=.5 of v1] (o1);
\ifdefined \cvlt
\draw[line width=1.5pt] (v0) to[bend left=45] (v3);
\else
\draw (v0) to[bend left=45] (v3);
\fi
\ifdefined \cvlb
\draw[line width=1.5pt] (v0) to[bend right=45] (v4);
\else
\draw (v0) to[bend right=45] (v4);
\fi
\ifdefined \cvlm
\draw[line width=1.5pt] (v0) -- (v2);
\else
\draw (v0) -- (v2);
\fi
\ifdefined \cvrt
\draw[line width=1.5pt] (v1) to[bend right=45] (v3);
\else
\draw (v1) to[bend right=45] (v3);
\fi
\ifdefined \cvrb
\draw[line width=1.5pt] (v1) to[bend left=45] (v4);
\else
\draw (v1) to[bend left=45] (v4);
\fi
\ifdefined \cvmt
\draw[line width=1.5pt] (v3) to[bend right=20] (v2);
\else
\draw (v3) to[bend right=20] (v2);
\fi
\ifdefined \cvmb
\draw[line width=1.5pt] (v4) to[bend left=20] (v2);
\else
\draw (v4) to[bend left=20] (v2);
\fi
\ifdefined \cvmm
\draw[line width=1.5pt] (v3) to[bend left=20] (v2d);
\draw[line width=1.5pt] (v4) to[bend right=20] (v2d);
\else
\draw (v3) to[bend left=20] (v2d);
\draw (v4) to[bend right=20] (v2d);
\fi
\filldraw[color=white] (v2d) circle(.2);
\ifdefined \cvrm
\draw[line width=1.5pt] (v1) -- (v2);
\else
\draw (v1) -- (v2);
\fi
\draw (v0) -- (i1);
\draw (v1) -- (o1);
\filldraw (v0) circle(1pt);
\filldraw (v1) circle(1pt);
\filldraw (v2) circle(1pt);
\filldraw (v3) circle(1pt);
\filldraw (v4) circle(1pt);
\end{scope}
\end{tikzpicture}
}
& &
&\end{aligned}$
\\
with the complete forests $\emptyset \subset \delta_1 \subset \alpha_i \subset \Gamma$, 
$\emptyset \subset \delta_2 \subset \beta_i \subset \Gamma$ and
$\emptyset \subset \gamma_i \subset \Gamma$.
  }
\caption{Counter example of a lattice, which appears in join-meet-renormalizable QFTs with four-valent vertices and is not graded.}
\label{fig:counterexplsemimod}
\end{figure}
\else

MISSING IN DRAFT MODE

\fi
}

Not all join-meet-renormalizable theories have this property for every Feynman diagram. For instance, in $\phi^4$-theory in $4$-dimensional spacetime, the diagram depicted in figure \ref{fig:diagramnonranked} with its subdiagrams in figure \ref{fig:diagramnonrankeddetails} appears. The corresponding lattice, shown in figure \ref{fig:diagramnonrankedlattice}, is not graded.

The appearance of these diagrams with non-graded lattices is characteristic for theories with four-valent vertices. In theories with only three-or-less-valent vertices all lattices are graded:
\begin{thm}{\cite[Thm. 4]{borinsky2015}}
\label{thm:gradthree}
In a renormalizable QFT with only three-or-less-valent vertices:
\begin{itemize}
\item $\sdsubdiags_D(\Gamma)$ is a graded lattice for every propagator, vertex-type diagram or disjoint unions of both.
\item $\hopflat$ is bigraded by $\nu(\hat{1})$ and the length of the maximal chains of the lattices, which coincides with the coradical degree in $\hopflat$.
\item $\hopffg_D$ is bigraded by $h_1(\Gamma)$ and the coradical degree of $\Gamma$. 
\item Every complete forest of $\Gamma$ has the same length.
\end{itemize}
\end{thm}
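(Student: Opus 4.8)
The plan is to reduce the whole statement to its first item, gradedness of $\sdsubdiags_D(\Gamma)$, since the other three items are formal consequences. Once every $\sdsubdiags_D(\Gamma)$ is graded, so is each interval $[\hat 0,x]$ and $[x,\hat 1]$ (intervals of graded lattices are graded), and the rank is additive, $\mathrm{rk}[\hat 0,x]+\mathrm{rk}[x,\hat 1]=\mathrm{rk}[\hat 0,\hat 1]$. Feeding this into the coproduct \eqref{eqn:poset_cop} shows that the maximal-chain length is a grading on $\hopflat$; together with $\nu(\hat 1)$, which is additive under both product and coproduct, this gives the claimed bigrading of $\hopflat$, and the maximal-chain length coincides with the coradical degree because iterating $\widetilde\Delta$ peels off one cover at a time. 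Transporting along the Hopf algebra morphism $\chi_D$ of Theorem \ref{thm:hopf_alg_morph}, and using $\nu(\hat 1)=h_1(\Gamma)$ together with the fact (established above) that the coradical degree of $\Gamma$ equals the longest-chain length in $\sdsubdiags_D(\Gamma)$, yields the bigrading of $\hopffg_D$. The last item is a verbatim restatement of gradedness, since complete forests are exactly the maximal chains of $\sdsubdiags_D(\Gamma)$.

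It therefore remains to prove that every maximal chain $\emptyset=\gamma_0\lessdot\gamma_1\lessdot\cdots\lessdot\gamma_k=\Gamma$ has the same length $k$. First I would record two structural facts. Each covering quotient $\gamma_i/\gamma_{i-1}$ is a \emph{connected} primitive $1$PI graph: were it a nontrivial disjoint union, its reduced coproduct would be nonzero, contradicting primitivity. And by renormalizability the superficial degree $\omega_D$ of a $1$PI subgraph depends only on its external leg content, so with only three-or-less-valent vertices the only divergent $1$PI residues available are the two-point and the three-point ones. This rigidity is what I would exploit, since it sharply constrains how two divergent $1$PI subgraphs of $\Gamma$ can share vertices and edges.

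The core of the argument is a length-preserving, Jordan--Hölder-type exchange: given two distinct covers $\gamma\lessdot\mu_1$ and $\gamma\lessdot\mu_2$ of a common element, I would complete the length-two chains through $\mu_1$ and through $\mu_2$ to maximal chains and show these have equal length, by analyzing the join $\mu_1\join\mu_2=\mu_1\cup\mu_2$ and the overlap $\mu_1\cap\mu_2$. I stress that I need only the equality of lengths, not the full covering property ``$\mu_1\join\mu_2$ covers both $\mu_1$ and $\mu_2$''; the latter is genuine semimodularity, which is strictly stronger and can already fail once tadpoles are present, and this is exactly why the theorem asserts only gradedness here. I expect \textbf{the overlap analysis to be the main obstacle}: when the two divergent subgraphs share part of a vertex neighbourhood, one must rule out an ``extra'' intermediate divergent subgraph that would lengthen one chain but not the other. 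With vertices of valence at most three, a shared two- or three-point substructure cannot be split in two inequivalent ways, so the local picture is forced and no length defect can appear. This is precisely the step that breaks at valence four, where a single four-valent vertex admits two genuinely different resolutions into nested two-point subdivergences; that is the mechanism behind the non-graded $\phi^4$ lattice of Figure \ref{fig:counterexplsemimod}.

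Finally I would assemble these local exchanges into a global induction on $h_1(\Gamma)$. Peeling off a coatom (descending one cover below $\hat 1$, i.e.\ removing a maximal proper divergent subgraph) passes to strictly smaller lattices, graded by the inductive hypothesis, and the exchange lemma guarantees that the choice of which coatom to peel does not alter the resulting chain length. This establishes gradedness of $\sdsubdiags_D(\Gamma)$ and hence, by the reduction in the first paragraph, the full theorem.
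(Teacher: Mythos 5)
Your reduction of the last three items to the first is sound: gradedness of every $\sdsubdiags_D(\Gamma)$, additivity of the rank over intervals, and the identification of complete forests with maximal chains do yield the two bigradings and the statement about forests, essentially as intended (note the paper itself gives no proof but cites \cite[Thm.~4]{borinsky2015}, so the comparison is with that reference). Your structural observations are also correct and well placed: covers correspond to connected primitive quotients, and renormalizability together with three-or-less-valence restricts superficially divergent 1PI subgraphs to two- and three-point residues, which is indeed where the hypothesis must enter.

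However, the first item, which is the entire content of the theorem, is not actually proved. The step you yourself flag as ``the main obstacle'' --- excluding a length defect when two covers $\mu_1,\mu_2$ of a common $\gamma$ overlap --- is settled only by the assertion that ``the local picture is forced,'' which is not an argument; it is precisely the combinatorial lemma that has to be established, and in \cite{borinsky2015} it is done by controlling the loop number and the external-leg count of $\mu_1\cup\mu_2$ and $\mu_1\cap\mu_2$. Worse, your decision to prove ``only equality of lengths'' rather than the covering property $\mu_i\lessdot\mu_1\vee\mu_2$ removes the mechanism that makes the Jordan--H\"older induction close: with semimodularity one compares a maximal chain through $\mu_1$ with one through $\mu_2$ by routing both through $\mu_1\vee\mu_2$, one cover above each, and applies the inductive hypothesis to the single smaller interval $[\mu_1\vee\mu_2,\hat 1]$; without it you have no way to relate the lengths of $[\mu_1,\mu_1\vee\mu_2]$ and $[\mu_2,\mu_1\vee\mu_2]$, and ``complete to maximal chains and show these have equal length'' presupposes what is being proved. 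Your motivating claim that semimodularity already fails for three-valent theories once tadpoles are present is unsupported and runs against the route actually taken in \cite{borinsky2015} (and signalled in this paper's abstract), where gradedness is obtained as a corollary of semimodularity. To repair the argument, prove the covering property directly, using the supermodularity of $h_1$ under union and intersection of subgraphs together with the two-or-three-external-leg constraint; gradedness then follows from the standard Jordan--H\"older theorem for semimodular lattices.
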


In theories with four-valent vertices, we can also enforce the disappearance of all non-graded lattices by working in a renormalization scheme where \textit{tadpole}-diagrams vanish. Tadpoles are diagrams which can be separated in two connected components by the removal of a single vertex such that one connected component does not contain any external legs of the initial diagram. Tadpole diagrams are also called snail or seagull diagrams. 

If we use such a renormalization scheme, we can define a Hopf ideal $I$ generated by all tadpole diagrams of the initial Hopf algebra $\hopffg_D$ and form the quotient $\hopffgs_D: = \hopffg_D/I$. Instead of working with $\hopffg_D$ the quotient $\hopffgs_D$ can be used without changing any results, because the Feynman rules vanish on the ideal $I$ by requirement. 
In this quotient, the lattices corresponding to the Feynman diagrams behave in a similar way as for theories with only three valent vertices!
\begin{thm}{\cite[Thm. 5]{borinsky2015}}
\label{thm:gradfour}
In a renormalizable QFT with only four-or-less-valent vertices:
\begin{itemize}
\item $\sdsubdiagsn_D(\Gamma)$ is a graded lattice for every propagator, vertex-type diagram or disjoint unions of both.
\item $\hopflat/\chi_D(I)$ is bigraded by $\nu(\hat{1})$ and the length of the maximal chains of the lattices, which coincides with the coradical degree in $\hopflat$.
\item $\hopffgs_D:= \hopffg_D/I$ is bigraded by $h_1(\Gamma)$ and the coradical degree of $\Gamma$. 
\item Every complete forest of $\Gamma$, which does not result in a tadpole upon contraction, has the same length.
\end{itemize}
\end{thm}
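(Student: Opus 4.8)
The plan is to collapse all four assertions onto the single structural claim that $\sdsubdiagsn_D(\Gamma)$ is a graded lattice, and to derive that gradedness from upper semimodularity.

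First I would record the reduction. By the covering criterion recalled earlier, a saturated chain $\hat{0}=\gamma_0 \lessdot \gamma_1 \lessdot \cdots \lessdot \gamma_k = \hat{1}$ in $\sdsubdiagsn_D(\Gamma)$ is exactly a complete forest of $\Gamma$ whose successive contractions $\gamma_i/\gamma_{i-1}$ are primitive and avoid the ideal $I$, and its length $k$ equals the coradical degree read off from the iterated reduced coproduct. Hence, granting gradedness, every such forest has the common length $k=\mathrm{cor}(\Gamma)$, which is the fourth bullet; the coradical filtration then becomes an honest grading, and combining it with the additive loop-number decoration $\nu(\hat{1})=h_1(\Gamma)$ — which $\chi_D$ respects by Theorem~\ref{thm:hopf_alg_morph} — yields the bigradings of $\hopffgs_D$ and of $\hopflat/\chi_D(I)$ in the second and third bullets. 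So everything rests on showing the tadpole-free lattice is graded.

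Next I would invoke the standard fact that a lattice of finite length is graded once it is upper semimodular, reducing the problem to the semimodular law: if $\gamma_1$ and $\gamma_2$ both cover $\gamma_1\meet\gamma_2$, then $\gamma_1\join\gamma_2=\gamma_1\cup\gamma_2$ covers each of them. Translating covers into primitivity, I must show that if $\gamma_1/(\gamma_1\meet\gamma_2)$ and $\gamma_2/(\gamma_1\meet\gamma_2)$ are primitive, then so are $(\gamma_1\cup\gamma_2)/\gamma_1$ and $(\gamma_1\cup\gamma_2)/\gamma_2$. The subtle point is that the lattice meet $\gamma_1\meet\gamma_2$ is the largest s.d.\ subdiagram lying below both, and can be strictly smaller than the set-theoretic intersection $\gamma_1\cap\gamma_2$, since the latter need be neither 1PI nor superficially divergent. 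I would therefore aim to establish a second-isomorphism identity $(\gamma_1\cup\gamma_2)/\gamma_1\cong \gamma_2/(\gamma_1\meet\gamma_2)$ and to show that it holds precisely when no tadpole is split off. Here the superficial degree of divergence $\omega_D$ serves as an additive gauge: in a four-or-less-valent theory it is an affine function of the loop number and the external-leg count, and four-valence bounds the external legs of the relevant subquotients tightly enough that the only way $\gamma_1\cap\gamma_2$ can fail to be superficially divergent is by containing a tadpole component.

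The main obstacle, and the crux of the whole argument, is this last bookkeeping step: pinning the failure of semimodularity entirely to tadpole creation, so that passing to $\sdsubdiagsn_D(\Gamma)$ — where exactly the tadpole-producing forests are discarded — restores the identity and with it the semimodular law. I expect the valence hypothesis to be indispensable precisely here: in the $\phi^4$-theory example displayed above the short, length-two forests through the $\gamma_i$ contract to tadpoles while the length-three forests through $\delta_1,\delta_2$ survive, so deleting the former makes the remaining complete forests equidimensional; at valence five or higher the external-leg counts no longer force the obstruction to be a tadpole, and the argument — like the statement — should fail.
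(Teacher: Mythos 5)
First, a point of reference: this survey does not prove Theorem~\ref{thm:gradfour} at all — it is imported verbatim from \cite{borinsky2015} with a citation and no argument — so there is no in-paper proof to compare against. Judged on its own terms, your sketch follows what is essentially the strategy of that reference: reduce all four bullets to gradedness of $\sdsubdiagsn_D(\Gamma)$, deduce gradedness from upper semimodularity via the Jordan--H\"older chain condition, and blame every failure of semimodularity in the full lattice $\sdsubdiags_D(\Gamma)$ on tadpole formation. Your reduction of the second, third and fourth bullets to the first is sound, and your reading of figure~\ref{fig:counterexplsemimod} is exactly right: the length-two forests through the $\gamma_i$ are precisely the ones excised in $\sdsubdiagsn_4(\Gamma)$ because $\Gamma/\gamma_i$ is a tadpole.

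There are, however, two genuine gaps. The smaller one: the first bullet asserts that $\sdsubdiagsn_D(\Gamma)$ is a graded \emph{lattice}, and you apply the semimodularity criterion to it directly, but removing the tadpole-producing elements from $\sdsubdiags_D(\Gamma)$ does not automatically leave a lattice — you must first check closure of the surviving elements under the relevant joins and meets (e.g.\ that $\Gamma/(\gamma_1\cup\gamma_2)$ is tadpole-free whenever $\Gamma/\gamma_1$ and $\Gamma/\gamma_2$ are), and this step is skipped entirely. The larger one is that the two claims carrying all the weight are announced rather than established. The identity $(\gamma_1\cup\gamma_2)/\gamma_1\cong\gamma_2/(\gamma_1\meet\gamma_2)$ holds unconditionally only with the set-theoretic intersection $\gamma_1\cap\gamma_2$ in place of the lattice meet; the entire content of the theorem sits in comparing $\gamma_1\cap\gamma_2$ with $\gamma_1\meet\gamma_2$ and in showing that, for four-or-less-valent vertices, the discrepancy can only manifest itself as a tadpole component. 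That requires the explicit bookkeeping you allude to — $\omega_D$ as an affine function of loop number and external-leg count, together with the Euler/valence relations for the union and the intersection — and you state the expected mechanism without carrying it out. As written, the proposal is a correct and well-aimed plan that isolates the crux, but the crux itself remains unproved.
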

where $\sdsubdiagsn_D(\Gamma)$ is the set of s.d.\ subdiagrams $\gamma$ of $\Gamma$ which do not yield tadpole diagrams upon contraction $\Gamma/\gamma$.

\section{The quotient $\hopffgs_D$: applications}
Kinematic renormalization schemes $\Phi_R:\hopffgs_D \to \mathbb{C}$ cover renormalization schemes which allow for well-defined asymptotic states and hence are natural from a physicists viewpoint.
Such schemes evaluate tadpole graphs to zero and hence are naturally defined for the above quotient $\hopffgs_D$ as $\Phi_R(I)=0$. 

Evaluating graphs by renormalized Feynman rules 
in such schemes leads to periods which have a motivic interpretation \cite{BEK,Bl,BingenLect}. We discuss some of such schemes most crucial aspects. We closely follow \cite{BrKr}
in this section. 
As usual we concentrate on scalar field theory which is generic for the whole situation.   

As we saw already amplitudes in quantum field theory can be written as a function of a chosen
scale variable $L=\ln(S/\mu^2)$ chosen such that it only vanishes when 
all external momenta vanish. We take $S$ to be a suitable linear combination 
of scalar products $q_i\cdot q_j$ of external momenta and squared masses $m_e^2$.
Dimensionless scattering angles $\Theta$ are defined accordingly as ratios $q_i\cdot q_j/S$ and $m_e^2/S$.
 
In these variables, amplitudes can be calculated as a perturbation expansion 
in terms of Feynman graphs $\Gamma$ as $\sum_\Gamma\Phi_R(\Gamma)$. Here, the renormalized Feynman rules $\Phi_R$ are expressed in terms of such
angle and scale variables, and the graphs $\Gamma$ are chosen in our quotient Hopf algebra $\hopffgs_D$. 

For any choice of angle and scale variables, $\Phi_R$ is in the group $\mathrm{Spec}_{\mathbb{C}}(\hopffgs_D)$, and the restriction of this group to maps which originate from evaluation of graphs by Feynman rules defines a sub-group $G_\text{Feyn}:=\mathrm{Spec}_{\mathrm{Feyn}}(\hopffgs_D)\subset\mathrm{Spec}_{\mathbb{C}}(\hopffgs_D)$.

Such a chosen decomposition of the variables reflects itself then in a chosen decomposition of the group $G_\text{Feyn}$ into two subgroups $G_{\mathrm{o.s.}}$, maps dependent on only one scale (o.s.) and $G_{\mathrm{fin}}$, maps dependent only on the angles. Elements $\Phi\in G_{\mathrm{o.s.}}$ are of the form 
\begin{equation}
\Phi(\Gamma)=\sum_{j=1}^{\mathrm{cor}(\Gamma)}p_j L^j,
\end{equation}
where the coefficients $p_j$ are periods in the sense of algebraic geometry and are independent of the angles $\{\Theta\}$, with the coradical degree $\mathrm{cor}(\Gamma)$ giving the bound.

Still following \cite{BrKr}, we allow for renormalization conditions which are defined by kinematic constraints on Green-functions: we demand that such Green functions, regarded as functions of $S$ and $\{\Theta\}$, vanish (up to a specified order) at a reference point (in $S,\{\Theta\}$-space) given by $S_0,\{\Theta_0\}$.  We implement these constraints graph by graph.  Hence renormalized Green functions as well as renormalized Feynman rules become functions of $S,S_0,\Theta,\Theta_0$. Here, $\Theta,\Theta_0$ stand for the whole set of angles in the Feynman rules, with $\Theta_0$ specifying the renormalization point. Note that minimal subtraction is not included in our set-up, renormalized Feynman rules in that scheme do not vanish on the ideal $I$ defined by tadpole graphs.

Elements $\Phi_{\mathrm{fin}}\in G_{\mathrm{fin}}$ are of the form
\begin{equation}\Phi_{\mathrm{fin}}(\Gamma)={c^\Gamma_0}(\Theta),\end{equation}
with ${c^\Gamma_0}(\Theta)$ an $L$-independent function of the angles. 

We hence obtain the decomposition of $G_\text{Feyn}$ 
as a map $\Phi^R\mapsto (\Phi_{\mathrm{fin}},\Phi_{\mathrm{o.s.}})$, 
which proceeds then by a twisted conjugation:
\begin{equation}
G_\text{Feyn}\ni \Phi_R(S,S_0,\Theta,\Theta_0)=\Phi^{-1}_{\mathrm{fin}}(\Theta_0)\star\Phi_{\mathrm{o.s.}}(S,S_0)\star\Phi_{\mathrm{fin}}(\Theta),
\end{equation}
with $\Phi_{\mathrm{fin}}(\Theta_0),\Phi_{\mathrm{fin}}(\Theta)\in G_{\mathrm{fin}}$ and $\Phi_{\mathrm{o.s.}}(S,S_0)\in G_{\mathrm{o.s.}}$. The group law $\star$ and inversion ${}^{-1}$ are defined through the Hopf algebra underlying $G_\text{Feyn}$.

\subsection{The additive group and renormalization schemes}
The most striking aspect of kinematic renormalization schemes is that they allow for an intimate connection between the additive group $\mathbb{G}_a$ and $\mathrm{Spec}(\hopffgs_D)$.
We have $\forall h\in \hopffgs_D$ \cite{BrKr,BlKr}
\begin{align}
\label{RG} 
&\Phi_R^L(h)= \Phi_R^{L_1+L_2} = m \circ (\Phi_R^{L_1}\otimes \Phi_R^{L_2}) \circ \Delta(h)=\Phi_R^{L_1}\star\Phi_R^{L_2},\,& &L=L_1+L_2.
\end{align}
Here, $L=\ln S/\mu^2$ defines the scale relative to a renormalization scale $\mu$. $\Phi_R^L:\,\hopffgs_D\to\mathbb{C}$ are renormalized Feynman rules,
and $\Phi_R^L(\Gamma)\equiv \Phi_R^L(\Gamma)(\{\Theta,\Theta_0\})$ is a function also of angles $\{\Theta\}$ and $\{\Theta_0\}$ (the latter for the renormalization point).

Note that to derive Eq.\ (\ref{RG}) and therefore the renormalization group in the context of the quotient Hopf algebra $\hopffgs_D$ only combinatorial properties of graphs and graph polynomials are needed \cite{BlKr,BrKr}. There is an intimate connection to the representation theory of the additive group $\mathbb{G}_a$ and Tannaka categories of Feynman graphs hiding between this set-up which is studied elsewhere \cite{BKY}.

\subsection{A tower  of  Hopf algebras}
The quotient Hopf algebra $\hopffgs_D$ is actually part of a tower of Hopf algebras which was defined in \cite{WalterDirk}, which we follow closely here.
We start with the quotient $\hopffgs_\infty$ of the core Hopf algebra $\hopffg_\infty$ \cite{coreDirk} of Feynman graphs, in which every union of 1PI subdiagrams is superficially divergent, by $I$, $\hopffgs_\infty = \hopffg_\infty/I$.

$\hopffg_\infty$ contains the renormalization Hopf algebra $\hopffgs_D$ itself as a quotient Hopf algebra \cite{WalterDirk,coreDirk} and similarly $\hopffgs_\infty$ contains $\hopffgs_D$.

For the structure of Green functions with respect to the Hopf algebra $\hopffgs_D$ we write $G^r(\{Q\},\{M\},\{g\};R)$ for a generic Green function, where
\begin{itemize}
 \item $r$ indicates the residue under consideration and we write $|r|$ for its number of external legs. Amongst all possible residues, there is a set
of residues provided by the free propagators and vertices of the theory. We write $\mathcal{R}$ for this set. It is in one-to-one correspondence with field monomials in a Lagrangian approach to field theory. The set of all residues is denoted by $\mathcal{A}=\mathcal{F}\cup\mathcal{R}$, which defines $\mathcal{F}$ as those residues only present through quantum corrections.
\item $\{Q\}$ is the set of external momenta $q_{e}$ subject to the condition $\sum_{e \in r} q_{e}=0$, where the sum is over the external half edges of $r$.
\item $\{M\}$ is the set of masses in the theory.
\item $\{g\}$ is the set of coupling constants specifying the theory. Below, we proceed for the case of a single coupling constant $g$, the general case posing no principal new problems.
\item $R$ indicates a chosen kinematic renormalization scheme. 
\end{itemize}
We also note that a generic Green function $G^r(\{Q\},\{M\},\{g\};R)$ has an expansion into scalar functions
\be G^r(\{Q\},\{M\},\{g\};R)= \sum_{t(r)\in S(r)} t(r) G_{t(r)}^r(\{Q\},\{M\},\{g\};R).\ee
In terms of mass dimensions ($[m^2]=2$) we have $\mathbb{N}_0\ni [t(r)]\geq 0$
and $[G_{t(r)}^r(\{Q\},\{M\},\{g\};R)]=0$.

Here, $S(r)$ is a basis set of Lorentz covariants $t(r)$ in accordance with the quantum numbers specifying the residue $r$.
For each $t(r)\in S(r)$, there is a projector $P^{t(r)}$ onto this formfactor.

For $r\in\mathcal{R}$, we can write \be G^r(\{Q\},\{M\},\{g\};R)=\Phi(r)G^r_{\Phi(r)}(\{Q\},\{M\},\{g\};R)+R^r(\{Q\},\{M\},\{g\};R),\ee
where $R^r(\{Q\},\{M\},\{g\};R)$ sums up all formfactors $t(r)$ but it only contributes through quantum corrections. $\Phi$ are the unrenormalized Feynman rules. Applied on the residue $r$, they evaluate to the tree-level amplitude $\Phi(r)$ for the vertex or edge associated to the residue $r$.

Each $G^r(\{Q\},\{M\},\{g\};R)$ can be obtained by the evaluation of a series of 1PI graphs
\begin{align}
X^r(g) & = & \One & & -& &\sum_{\res(\Gamma)= r}g^{|\Gamma|}\frac{\Gamma}{\mathrm{Sym}(\Gamma)},& &\forall r\in \mathcal{R},& &|r|=2,\label{green1}\\
X^r(g) & = & \One & & +& & \sum_{\res(\Gamma)= r}g^{|\Gamma|}\frac{\Gamma}{\mathrm{Sym}(\Gamma)},& &\forall r\in \mathcal{R},& &|r|>2,\label{green2}\\
X^r(g) & = && && &\sum_{\res(\Gamma)= r}g^{|\Gamma|}\frac{\Gamma}{\mathrm{Sym}(\Gamma)}& ,& \forall r\notin \mathcal{R},&\label{green3}
\end{align}
where we take the minus sign for $|r|=2$ and the plus sign for $|r|>2$. Furthermore, the notation $\res(\Gamma)= r$ indicates a sum over graphs with external leg structure in accordance with $r$.

We write $\Phi,\Phi_R$ for the unrenormalized and renormalized Feynman rules regarded as a map: $\hopffgs_D \to \mathbb{C}$ from the Hopf algebra to  $\mathbb{C}$.

We have
\be G^r_{t(r)}(\{Q\},\{M\},\{g\};R)=\Phi_R^{t(r)}(X^r(g))(\{Q\},\{M\},\{g\};R),\ee
where each non-empty graph is evaluated by the  renormalized Feynman rules 
\begin{align}
\Phi_R^{t(r)}(\Gamma)&:=(\id-R) \circ m \circ (S_R^\Phi \otimes P^{t(r)}\Phi P) \circ \Delta(\Gamma)\label{renFR} \\ 
S_R^\Phi(\Gamma) &:= -R \circ m \circ ( S_R^\Phi \otimes \Phi P ) \circ \Delta(\Gamma)
\end{align}
and $\Phi_R^{t(r)}(\One)=1$, and $P$ the projection into the augmentation ideal of $\hopffgs_D$, $P^{t(r)}$ the projector on the formfactor $t(r)$ and $R$ the renormalization map.

It is in the evaluation Eq.\ (\ref{renFR}) that the coproduct of the renormalization Hopf algebra appears. Combining the combinatorial Dyson--Schwinger equations (see \cite{Kock} for a recent overview of such equations) Eqs.\ (\ref{green1}, \ref{green2}, \ref{green3}) with Feynman rules and with the renormalization group Eq.\ (\ref{RG}) turns them into ordinary non-linear differential equations studied in \cite{KYUvG1,KYUvG2} which determine the physics behind quantum field theory.

The above sum over all graphs simplifies when one takes the Hochschild cohomology of the (renormalization) Hopf algebra into account:
\be 
X^r(g)=\delta_{r,\mathcal{R}}\One\pm 
\sum_{\substack{\Gamma ~\text{1PI} \\ \res(\Gamma)= r \\ \widetilde{\Delta}(\Gamma)=0}}\frac{1}{\mathrm{Sym}(\Gamma)}g^{|\Gamma|}  B_+^\Gamma(X^r(g) Q(g)),
\ee
($-$ sign for $|r|=2$, + sign for $|r|>2$, $\delta_{r,\mathcal{R}}=1$ for $r\in \mathcal{R}, 0$ else)
with $Q(g)$ being the formal series of graphs assigned to an invariant charge of the coupling $g$:
\be 
Q^r(g):=\left[\frac{X^{r}}{\prod_{e\in r}\sqrt{X^e}}\right]^{\frac{1}{|r|-2}}.
\ee
The existence of a unique invariant charge depends on the existence of suitable coideals. Although we can define an invariant charge for every residue $r \in \mathcal{R}$ with $|r| > 2$, the Slavnov-Taylor-Identities guaranty that upon evaluation with a counter-term map, they will all give the same renormalized charge. We can therefore drop the index $r$ and write $Q = Q^r$. $B_+^\gamma$ are {\it grafting operators} which are Hochschild cocycles, and the above combinatorial Dyson--Schwinger equations can be formulated in any quotient Hopf algebra. More on such equations can be found in \cite{WalterDirk,Foissy,Yeats,Olaf,Kock}.

The existence of the equation above indicates immediately that there is a natural Hopf algebra homomorphism $\eta$ from the Hopf algebra of rooted trees $\hopfrt$ by the universal property. Together with the Hopf algebra morphism $\chi_D$ to the Hopf algebra of decorated lattices, we have the following relationships:
\begin{align}
\hopfrt \xrightarrow{\eta} \hopffg_D \xrightarrow{\chi_D} \hopflat
\end{align}
The relationships of these different Hopf algebras especially the morphism given by $\eta \circ \chi_D: \hopfrt \rightarrow \hopflat$, will be subject of a future work.

Summarizing, there is a tower of quotient Hopf algebras (all of them also in a quotient obtained by dividing by $I$)
\be \hopffgs_4 \subset\hopffgs_6\cdots\subset \hopffgs_{2n} \subset \cdots \subset \hopffgs_{\mathrm{core}}=\hopffgs_\infty,\ee
obtained by restricting the coproduct to sums over graphs which are superficially divergent in $$D=4,6,\ldots,2n,\ldots,\infty$$ dimensions. 

We can make this explicit by including the spacetime into the notation for the coproduct:
\begin{align}
        &\Delta_D: \Gamma \mapsto
        \sum \limits_{ \gamma \in \sdsubdiags_D(\Gamma) }
            \gamma \otimes \Gamma/\gamma& &:& &{\hopffg_D} \rightarrow {\hopffg_D} \otimes {\hopffg_D}, 
\end{align}

Most striking is the connection to the additive group $\mathbb{G}_a$ which establishes itself here as announced previously.
We have
\be 
X^r=\One\pm \sum_{j\geq 1} h_j^r.
\ee
It follows from the above that the representation of $\mathbb{G}_a$ on the subvectorspace ${}^*\hopffgs_D$ spanned by such generators $h_i^r$ of the sub-Hopf algebras (Foissy \cite{Foissy} the appearance of such sub-Hopf algebras in great detail) defined by a combinatorial DSE has the form $L\to\exp{LN^r}$  where $N^r$ is a lower triangular matrix for each residue $r$. More on this and the resulting Tannakian structure of Feynman graphs will be given in \cite{BKY}.

Let us conclude with two remarks which follow from this set-up.
\begin{rem}
Investigating the Cutkosky rules \cite{Cutkosky} we can write fix-point equations for cut graphs and therefore fix-point equations for imaginary part of Green functions.
Indeed, following \cite{Cutkosky}, all algebraic structures needed to study the analytic properties of amplitudes can be formulated in $\hopffgs_D$, as tadpole graphs do not allow for non-trivial variations in external momenta as there is no momentum flow through them.

The 1-cocycles $B_+^\gamma$  which run a Green function can then be decomposed according to the complete $k$-particle cuts of $\gamma$ to obtain recursive equations for Green functions and their imaginary parts. Details will be given in future work (see also \cite{Cutkosky}, in particular lemma (3) in that reference).  
\end{rem}
\begin{rem}
In the quotient $\hopffgs_D$ together with its accompanying combinatorial Dyson--Schwinger equations all renormalization group effects come from a soft logarithmic breaking of conformal invariance as their are no quadratic divergences left for kinematic renormalization schemes. 
Accordingly, Dyson--Schwinger equations are determined by kinematical boundary conditions, and the equations themselves describe the dimensionless quantum corrections to dimensionful tree-level amplitudes.

Fine-tuning or hierarchy problems are hence spurious. They are a typical consequence of using either a dimensionful regulator and/or  renormalization schemes not in accordance with the equations of motion.
\end{rem}
\bibliography{literature}

\end{document}